\newcommand{\Tr}{\operatorname{Tr}}
\newcommand{\ket}[1]{|#1\rangle}
\newcommand{\bra}[1]{\langle#1|}
\newcolumntype{L}[1]{>{\raggedright\let\newline\\\arraybackslash\hspace{0pt}}m{#1}}
\newcolumntype{C}[1]{>{\centering\let\newline\\\arraybackslash\hspace{0pt}}m{#1}}
\newcolumntype{R}[1]{>{\raggedleft\let\newline\\\arraybackslash\hspace{0pt}}m{#1}}
\newtheorem{corollary}{Corollary}
\newtheorem{theorem}{Theorem}
\theoremstyle{definition}
\DeclareRobustCommand{\orcidicon}{%
	\begin{tikzpicture}
	\draw[lime, fill=lime] (0,0) 
	circle [radius=0.16] 
	node[white] {{\fontfamily{qag}\selectfont \tiny ID}};
	\draw[white, fill=white] (-0.0625,0.095) 
	circle [radius=0.007];
	\end{tikzpicture}
	\hspace{-2mm}
}
\xdef\csname orcid\x\endcsname{\noexpand\href{https://orcid.org/\csname orcidauthor\x\endcsname}{\noexpand\orcidicon}}
\begin{document}
\title{Port-based entanglement teleportation via noisy resource states}
%Dual-Rail Quantum Teleportation for Transmitting Entanglement}
\author{Ha Eum Kim\orcidA{}}
\email{hekim007@korea.ac.kr}
\affiliation{Department of Physics, Korea University, Seoul 02841, Korea}
\author{Kabgyun Jeong\orcidB{}}
\email{kgjeong6@snu.ac.kr}
\affiliation{Research Institute of Mathematics, Seoul National University, Seoul 08826, Korea}
\affiliation{School of Computational Sciences, Korea Institute for Advanced Study, Seoul 02455, Korea}

\date{\today}

\begin{abstract}
Port-based teleportation (PBT) represents a variation of the standard quantum teleportation and is currently being employed and explored within the field of quantum information processing owing to its various applications.
In this study, we focus on PBT protocol when the resource state is disrupted by local Pauli noises.
Here, we fully characterise the channel of the noisy PBT protocol using Krauss representation.
Especially, by exploiting the application of PBT for entanglement distribution necessary in realizing quantum networks, we investigate entanglement transmission through this protocol for each qubit considering noisy resource states, denoted as port-based entanglement teleportation (PBET).
Finally,
we derive upper and lower bounds 
for the teleported entanglement as a function of the initial entanglement and the noises.
Our study demonstrates that quantum entanglement can be efficiently distributed by protocols utilizing large-sized resource states in the presence of noise and is expected to serve as a reliable guide for developing optimized PBET protocols.
To obtain these results, 
we address that the order of entanglement of two qubit states is preserved through the local Pauli channel, and identify the boundaries of entanglement loss through this teleportation channel.
\end{abstract}
\maketitle

\section{Introduction}

Quantum teleportation, initially proposed by Bennett \textit{et al}. \cite{bennett1993teleporting}, is an innovative protocol for transmitting an unknown quantum state to a spatially separated receiver without the need to physically transmit qubits.
What makes this classically unexpected phenomenon possible is that the sender and the receiver previously shared a maximally entangled state, creating a long range quantum correlation. To implement the long-distance entanglement, which is the core of the quantum networks\cite{kalb2017entanglement,daiss2021quantum,pompili2021realization}, entanglement swapping \cite{pan1998experimental} and entanglement teleportation \cite{lee2000entanglement} have been studied theoretically, and have been experimentally developed using optical fibers and satellites \cite{Valivarthi2016quantum,Ren2017ground,Barasinski2019demonstration}.

The concept of teleportation has also become an essential part of quantum information theory. It has been experimentally demonstrated in various systems \cite{Barrett2004Deterministic,olmschenk2009quantum} and studied theoretically from diverse perspectives \cite{Lee2021Quantum,zhou2000Methodology,chitambar2023duality}. Additionally, various application protocols have been proposed to enhance performance or meet specific purposes, such as bidirectional quantum teleportation \cite{ikken2023bidirectional}, controlled quantum teleportation \cite{dong2011controlled,Jeong2016minimal}, continuous variable teleportation \cite{braunstein1998teleportation}, and their combinations \cite{zha2013bidirectional,kirdi2023controlled} .

One such variant, port-based teleportation (PBT) \cite{ishizaka2008asymptotic,ishizaka2009quantum,Wang2016port,studzinski2017port,Christandl2021asymtotic,Mozrzymas2018optimal,Strelchuk2023minimal,jeong2020generalization,studzinski2022efficient} has been introduced. While the standard teleportation protocol requires recovering operation by the receiver at the end, PBT does not require this quantum process. Instead, the receiver simply selects a port according to the classical information related to the sender's measurement outcome.

The generality of the PBT protocol enables a variety of applications in fields such as cryptography \cite{Beigi2011simplified}, holography \cite{may2022complexity}, quantum computing \cite{sedlak2019optimal,quintino2019reversing}, and quantum telecloning \cite{karlsson1998quantum,murao1999quantum}. Furthermore, it offers insights into non-local measurements of multi-partite states and contributes to studies on communication complexity \cite{buhrman2016quantum} and quantum channels \cite{Pirandola2019fundamental}.

Despite its prospective value, this protocol has not been experimentally realized due to the challenging implementation of the joint positive operator-valued measure (POVM). According to a recent study \cite{grinko2023efficient}, PBT was expressed as an efficient quantum circuit for the first time using the symmetry existing within the joint POVM. This opens the way for experimental realizations of PBT on a variety of physical platforms in the near future.

However, in a realistic implementation of the protocol, noise is inevitable due to the unavoidable interaction with the environment while the entangled state is distributed to the two parties.
Consequently, the relation between quantum teleportation and noise on resource state has been investigated \cite{oh2002fidelity,fonseca2019high,knoll2014noisy}.
As a fundamental finding, Popescu \textit{et al}. \cite{popescu1994bell} proved entanglement of resource state must remain to have higher fidelity than a classical communication process.
As an initial study of PBT with noisy resource states, 
Pereira \textit{et al}. \cite{Pereira2021characterising} investigated the amplitude damping channel by means of PBT with a finite number of ports.

In this context, we explore PBT with resource states affected by Pauli noise, regardless of the number of ports. Pauli noise is one of the quantum noise model capable of describing various types of incoherent noise including dephasing, bit-flip, and depolarizing. 
Moreover, the randomized compilation technique \cite{wallman2016noise,hashim2021randomized} has recently been developed, enabling general noise models to be tailored to stochastic Pauli errors.
This technique works by applying independently randomized gates into a logical circuit in such a way that leaves the effective logic circuit unchanged.
From the same perspective, we consider the maximally entangled states rotated by independently randomized local operations and the corresponding measurement matching the direction of each rotated state.
While leaving the PBT protocol unaffected, this transformation tailors the assorted general noises arising in the resource state into stochastic Pauli noise.

In this work, we derive and represent the analytic expression of PBT using resource states affected by local Pauli noises, employing the Krauss operator formalism. We find that the channel can be decomposed into a chain of channels of number of ports and environmental noises. Additionally, we consider the scheme to teleport an entangled states with PBT for each qubit, referred to as port-based entanglement teleportation (PBET), in the presence of noise. More precisely, we investigate the upper and lower bounds of transmitted entanglement.

This paper is organized as follows.
In section \ref{sec:sub:fids}, we first revisit the definitions and relationships between teleportation fidelity and entanglement fidelity. Additionally, we provide an overview of the positive partial transposition (PPT) criterion \cite{peres1996separability,horodecki2009quantum}, which is one of the quantities that satisfies the condition for the measure of entanglement \cite{vedral1997quantifying}.
We end this section with a brief introduction of the PBT protocol in section \ref{sec:sub:intro_PBT}.
Moving on to section \ref{sec:pl_ch}, we prove the preservation of entanglement order (see section \ref{sec:sub:ent_ord_pre}) and investigate boundaries of entanglement (see section \ref{sec:sub:bnd_ent}) for two-qubit states affected by Pauli channels. 
These channels not only represent the noise we encountered but also serve as the channel through which teleportation is accomplished, as explained in the subsequent section.
In section \ref{sec:sub:ch_descpt}, we describe the noisy PBT protocol as a chain of channels with number of ports and environment noise.
In section \ref{sec:sub:ent_of_pbt}, we present the upper and lower bounds of entanglement teleportation.
Finally, discussions and remarks are provided in section \ref{sec:con}.

\section{Preliminary}
\label{sec:preliminary}
\subsection{Teleportation fidelity, entanglement fidelity, measure of entanglement}
\label{sec:sub:fids}

%20230907 negativity 랑 PPT랑 한 번 설명해주는 문장 필요할듯
We start with the basic definitions of teleportation fidelity, entanglement fidelity, and PPT criterion. 
The fidelity of communication over a teleportation channel $\Lambda$ is  given by
    \begin{align}
        f(\Lambda):=\int\bra{\psi}\Lambda
            \bigl(\ket{\psi}\bra{\psi}\bigr)
            \ket{\psi}
            d\psi,
    \end{align}
 where the integral is performed with respect to the uniform distribution $d\psi$ over all $d$-dimensional pure states \cite{horodecki1999general}. The entanglement fidelity of channel $\Lambda$ is given by
    \begin{align}
    \label{def:ent_fid}
        F(\Lambda):=\Tr
        \Bigl[
        \hat{\Psi}^+
        \bigl[
        (\Lambda\otimes \mathbb{I})\hat{\Psi}^+
        \bigr]
        \Bigr],
    \end{align} 
where $\hat{\Psi}^+:=\frac{1}{d}\sum^{d-1}_{i,j=0}\ket{i}\bra{j}\otimes\ket{i}\bra{j}$ is the density  matrix of the bipartite maximally entangled state with Schmidt rank $d$,
and $\mathbb{I}$ is an identity channel.
Furthermore, a universal relation \cite{horodecki1999general} between teleportation fidelity and entanglement fidelity is expressed by
    \begin{align}
    \label{eq:rel_tel_fid_and_ent_fid}
        f(\Lambda)=\frac{2F(\Lambda)+1}{3}.
    \end{align}
Given that classical communication can have a maximum fidelity of $f(\Lambda)=2/(d+1)$, teleportation with a fidelity exceeding this classical limit holds significant importance \cite{badziag2000local}.

%20230907 negativity로 바꿔야할듯
In this study, the PPT criterion \cite{peres1996separability} was considered as the measure of entanglement. Let $\hat{\rho}$ be density matrix in a multi-qudit system and $\hat{\rho}^{T_1}$ be the partial transpose of the first qudit of $\hat{\rho}$. Then if $\hat{\rho}$ is separable, all the eigenvalues of $\hat{\rho}^{T_1}$ are guaranteed to be positive.
In particular, for a two-qubit systems, this statement becomes not only a necessary condition to be separable, but also a sufficient condition \cite{HORODECKI1996separability}.
Therefore, the measure of entanglement for a two-qubit system can be defined by the negative eigenvalue $\lambda^-$ of matrix $\hat{\rho}^{T_1}$, if it exists. This can be expressed in the form of
    \begin{align}
    \label{eq:measure_of_entanglement}
        \mathcal{M}(\hat{\rho}):=\max
        \left[0,-2\lambda^-
        \right],
    \end{align}
where the matrix cannot have more than one negative eigenvalue in this case.

\subsection{Port-based teleportation}
\label{sec:sub:intro_PBT}
In this paper, we consider the standard PBT scheme first proposed in \cite{ishizaka2008asymptotic}, and restrict our discussion to qubit systems with $d=2$.
To start with, let's assume that the sender, say Alice, and the receiver, say Bob, share a 2$N$-qudit pure state described by
    \begin{align}
    \label{eq:resource_sts}
        \hat{\Phi}_{\vec{B}\vec{A}}:=
        \hat{\Psi}^+_{B_1A_1}\otimes\cdots\otimes
        \hat{\Psi}^+_{B_NA_N}
    \end{align}
where $\hat{\Psi}^+:=\ket{\Psi^0}\bra{\Psi^0}$ is the density matrix of Bell state defined as
    \begin{align}
    \label{eq:bell_sts}
        \ket{\Psi^0}:=\frac{\ket{0}\otimes\ket{0}
        +\ket{1}\otimes\ket{1}}
        {\sqrt{2}}.
    \end{align}
Let $A_i$ and $B_i$ denote the $i\in\{1,2,...,N\}$th port of Alice and Bob's resource systems, given by $\vec{A}:=\{A_1,A_2,...,A_N\}$ and $\vec{B}:=\{B_1,B_2,...,B_N\}$, respectively. In this subsection, we use the subscript of a state or an matrix to indicate where it operates.

the sender, say Alice, jointly measures qudit containing a unknown state she wants to transmit along with one part of the pre-shared entangled state, and transmits the measurement outcome to the receiver, say Bob, using classical communication channel. 

Alice prepares her qubit $C$ with an unknown state $\hat{\rho}$.
To transmit the state to Bob, Alice jointly measures qubits $\vec{A}$ and $C$ with measurement described by POVM whose elements $\left\{\hat{\Pi}^{(i)}\right\}^N_{i=1}$ are defined as
    \begin{align}
    \label{eq:povm_meas}
        \hat{\Pi}^{(i)}:=\hat{\Pi}'^{(i)}+\hat{\Delta},\;\;\;
        \mathrm{with}\;
        \hat{\Delta}:=\frac{1}{N}\sum_{j=1}^{N}\hat{\Pi}'^{(j)}
    \end{align}
in the form of the square-root measurement (SRM),
    \begin{align}
        \hat{\Pi}'^{(i)}:=
        \hat{T}^{-\frac{1}{2}}
        \hat{\tau}^{(i)}
        \hat{T}^{-\frac{1}{2}}\;\;;\;\;
        \hat{T}:=\sum^{N}_{j=1}\hat{\tau}^{(j)}.
    \nonumber
    \end{align}
$\hat{\tau}^{(i)}$ are called \textit{signal} states, expressed as
    \begin{align}
        \hat{\tau}^{(i)}_{C\vec{A}}
        :=\frac{1}{2^{N-1}}\hat{\Psi}^+_{CA_i}\otimes\hat{I}_{\bar{A}_i},
    \nonumber
    \end{align}
where $\hat{I}_{\bar{A}_i}$ represents the identity operator acting on the rest of qubits excluding $A_i$ expressed as
$\bar{A}_i:=\vec{A}\backslash\{A_i\}$.
After the measurement, Alice sends her outcome $i$ using a classical communication channel. Then Bob selects and retains qubit $B_i$, also referred to as a port, corresponding to the outcome $i$, and removes the rest of his qubits. The channel of PBT is expressed as
    \begin{align}
    \label{eq:PBT_channel}
        \Lambda(\hat{\rho}_C)
        &=\sum_{j=1}^N
        \Tr_{\vec{A}\bar{B}_jC}
        \left[
        \left(
        \hat{I}_{\vec{B}}\otimes
        \hat{\Pi}_{\vec{A}C}^{(j)}
        \right)
        \left(
            \hat{\Phi}_{\vec{B}\vec{A}}
            \otimes\hat{\rho}_{C}
        \right)
        \right],
    \end{align}
%        &=
%        \sum_{j=1}^N
%        \Tr_{\vec{A}C}
%        \left[
%        \left(
%        \hat{I}_B\otimes
%        \hat{\Pi}_{\vec{A}C}^{(j)}
%        \right)
%        \left(
%            \hat{\tau}^{(j)}_{B\vec{A}}
%            \otimes\hat{\rho}_{C}
%        \right)
%        \right].
%    \end{align}
where $\bar{B}_i:=\vec{B}\backslash\{B_i\}$.
According to Eq. (\ref{def:ent_fid}), the exact form of entanglement fidelity $F(\Lambda)$ is given by \cite{ishizaka2008asymptotic,ishizaka2009quantum}
%    \begin{align}
%    \label{eq:ent_fid_PBT}
%        \frac{1}{2^{N+3}}\sum^N_{k=0}
%        \left(
%        \frac{N-2k-1}{\sqrt{k+1}}+\frac{N-2k+1}{\sqrt{N-k+1}}
%        \right)^2
%20230919: binom doesn't work
%%%%%%%%%%%%%\bonom doesn't work%%%%%%%%%%%%%%%%%%%
%https://tex.stackexchange.com/questions/109325/help-with-the-command-binom
%        \binom{N}{k}
%    \end{align}
    \begin{align}
    \label{eq:ent_fid_PBT}
        \frac{1}{2^{N+3}}\sum^N_{k=0}
        \left(
        \frac{N-2k-1}{\sqrt{k+1}}+\frac{N-2k+1}{\sqrt{N-k+1}}
        \right)^2
        {N \choose k}
        .
    \end{align}

For asymptotic limit of $N\rightarrow \infty$, the entanglement fidelity approximates to
    \begin{align}
        F(\Lambda)\rightarrow1-\frac{3}{4N},
    \end{align}
while the teleportation fidelity approximates to
    \begin{align}
        f(\Lambda)\rightarrow1-\frac{1}{2N}
    \end{align}
according to Eq. (\ref{eq:rel_tel_fid_and_ent_fid}).

\section{Pauli channel}
\label{sec:pl_ch}

The interaction between quantum systems and their surrounding environments introduces noise into the system's state, causing a degradation of coherence and purity.
To mathematically describe the noise, one widely adopted approach is the Kraus operator formalism \cite{nielsen2010quantum}.
Specifically, consider a state $\hat\rho$ and a trace-preserving quantum operation $\mathcal{E}$. The action of the quantum operation on the state can be expressed as 
\begin{align}
\mathcal{E}(\hat\rho)=\sum_k\hat{E}_k\hat\rho\hat{E}_k^\dagger,
\nonumber
\end{align}
where $\{\hat{E}_k\}$ denotes the set of Kraus operators of $\mathcal{E}$ satisfying the completeness relation $\sum_k\hat{E}_k^\dagger\hat{E}_k=\hat{I}$.
This method offers the advantage of describing states without requiring a deep physical understanding of the interaction of the system we are interested in with its environment.

In this study, we focused on thePauli noise to represent the noise generated by the resource state interacting with the local environment.
This encompasses incoherence noise, including bit flip, bit-phase flip, phase flip, and depolarizing noise.
Table \ref{tb:pl_op} provides the Kraus operator representation and notation; the representation for general Pauli noise at a single qubit is
\begin{align}
    \mathcal{E}_{\vec{p}}(\hat\rho)
    =\sum_{k=0}^3\hat{E}_k\hat\rho\hat{E}_k^\dagger
    \;\;;\;\;
    \hat{E}_i
    :=
    \sqrt{\frac{p_i}{4}}\hat{\sigma}_i,
    \;
    p_i>0,
\label{eq:quan_op_pl}
\end{align}
where $\hat{\sigma}_0$ represents the identity matrix, $\hat{\sigma}_i$ for $i=1,2,3$ are Pauli matrices, and the channel probabilities $\vec{p}:=\{p_1,p_2,p_3\}$ satisfy the relation $\sum_{k=0}^3p_k=4$.
Let us define one of the quantities we use to describe Pauli noise, namely the average of channel probabilities of 
$\mathcal{E}_{\vec{p}}$ as
    \begin{align}
    \label{eq:avg_of_p}
        \Omega
        :=
        \frac{1}{3}\sum_{j=1}^3p_j.
    \end{align}

    \begin{table}
    \begin{center}
    \begin{tabular}{ L{6em} | C{4em} | L{15em} }
    \hline
    \centering{Noise} & Notation & Kraus operator\\[1.2ex] 
    \hline
    \hline
    Bit flip &
    $\mathcal{E}^1_p$ &
    $\hat{E}_0=\sqrt{1-\frac{3p}{4}}\hat{\sigma}_0\;,\;
    \hat{E}_1=\sqrt{\frac{3p}{4}}\hat{\sigma}_1$\\[1.2ex]
    \hline
    Bit-phase flip & 
    $\mathcal{E}^2_p$ &
    $\hat{E}_0=\sqrt{1-\frac{3p}{4}}\hat{\sigma}_0\;,\;
    \hat{E}_1=\sqrt{\frac{3p}{4}}\hat{\sigma}_2$\\[1.2ex]
    \hline    
    Phase flip &
    $\mathcal{E}^3_p$ & 
    $\hat{E}_0=\sqrt{1-\frac{3p}{4}}\hat{\sigma}_0\;,\;
    \hat{E}_1=\sqrt{\frac{3p}{4}}\hat{\sigma}_3$\\[1.2ex]
    \hline    
    \multirow{2}{4em}{Depolarizing}
    & \multirow{2}{2em}{$\mathcal{E}^{\mathrm{dep}}_p$}
    &$\hat{E}_0=\sqrt{1-\frac{3p}{4}}\hat{\sigma}_0\;,\;
    \hat{E}_1=\sqrt{\frac{p}{4}}\hat{\sigma}_1$\;,
    \\
    &
    &$\hat{E}_2=\sqrt{\frac{p}{4}}\hat{\sigma}_2\;,\;
    \hat{E}_3=\sqrt{\frac{p}{4}}\hat{\sigma}_3$\\[1.2ex]
    \hline
    \hline
    \end{tabular}
    \end{center}
    \caption{Kraus representation of one of the most common types of noise with the form of Pauli noise. The noise values in the table are aligned to have the same average of channel probabilities.
    \label{tb:pl_op}}
    \end{table}

The Kraus representation of Pauli noise, not only explains the noise but also facilitates the description of noisy PBT scheme. Therefore, we refer to it as the Pauli channel, which will be further explored in section \ref{sec:noisy_pbt}, demonstrating that the properties observed in the PBT protocol is attributable to this channel.

Within the remainder of this section, we first prove that the entanglement order of any two states remains invariant under local Pauli channels (section \ref{sec:sub:ent_ord_pre}). Subsequently, in section \ref{sec:sub:bnd_ent}, we investigate upper and lower bounds regarding the reduction of entanglement caused by the Pauli channel.

\subsection{Preservation of entanglement order}
\label{sec:sub:ent_ord_pre}
In this section, we demonstrate that when the entanglement of one arbitrary state surpasses that of another state, the entanglement of the former remains greater even after going through a local Pauli channel.
We begin by establishing the preservation of entanglement order for a single qubit Pauli channel. The statement is as follows
    \begin{theorem}\label{th:ord_pre_pl}
    If the entanglement of bipartite mixed states
    $\hat{\rho}$
    and 
    $\hat{\tau}$
    satisfies
    \begin{align}
    \label{eq:thm:ent_cond}
    \mathcal{M}\left(
            \hat{\rho}
    \right)
    >
    \mathcal{M}\left(
            \hat{\tau}
    \right)>0,
    \end{align}
    then it follows that
        \begin{align}
            \mathcal{M}
            \Bigr[
            (\mathcal{E}_{\vec{p}}\otimes\mathbb{I})
                \hat{\rho}
            \Bigr]
            >
            \mathcal{M}
            \Bigr[
            (\mathcal{E}_{\vec{p}}\otimes\mathbb{I})
                \hat{\tau}
            \Bigr]
        \end{align}
    for all the channel probabilities $\vec{p}$ that satisfies
        \begin{align}
        \label{eq:thm:cond_pos_ent}
            \mathcal{M}
            \Bigr[
            (\mathcal{E}_{\vec{p}}\otimes\mathbb{I})
                \hat{\rho}
            \Bigr]
            >
        0,
        \end{align}
    where $\mathbb{I}$ is an identity channel.
\end{theorem}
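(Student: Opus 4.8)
The plan is to reduce the claim to a monotonicity property of the smallest eigenvalue of a partially transposed two‑qubit operator, and then to establish that property from the structure of a Pauli channel.

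The first move is the commutation identity: since $\hat\sigma_i^{T}=\pm\hat\sigma_i$ and every Kraus operator of $\mathcal{E}_{\vec p}$ enters as a two‑sided conjugation $\hat X\mapsto\hat\sigma_i\hat X\hat\sigma_i$, the two sign changes cancel, so partial transposition on the noisy qubit commutes with the channel,
\begin{align}
\bigl[(\mathcal{E}_{\vec p}\otimes\mathbb{I})\hat\rho\bigr]^{T_1}=(\mathcal{E}_{\vec p}\otimes\mathbb{I})\bigl(\hat\rho^{T_1}\bigr).
\nonumber
\end{align}
By Eq.~(\ref{eq:measure_of_entanglement}), $\mathcal{M}(\hat\rho)=-2\lambda^{-}(\hat\rho^{T_1})$ with $\lambda^{-}$ the smallest eigenvalue, and since $(\mathcal{E}_{\vec p}\otimes\mathbb{I})\hat\rho$ is again a genuine two‑qubit state, its partial transpose again has at most one negative eigenvalue; combined with the hypothesis $\mathcal{M}[(\mathcal{E}_{\vec p}\otimes\mathbb{I})\hat\rho]>0$ this shows the output partial transpose has exactly one negative eigenvalue, which is therefore simple and isolated. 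The theorem then becomes: if $\lambda^{-}(\hat\rho^{T_1})<\lambda^{-}(\hat\tau^{T_1})<0$ then $\lambda^{-}[(\mathcal{E}_{\vec p}\otimes\mathbb{I})\hat\rho^{T_1}]<\lambda^{-}[(\mathcal{E}_{\vec p}\otimes\mathbb{I})\hat\tau^{T_1}]$, whenever the left side is negative.

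Next I would expand $\mathcal{E}_{\vec p}=\sum_{k=0}^{3}\tfrac{p_k}{4}\,\mathcal{U}_k$ with $\mathcal{U}_k(\hat X)=\hat\sigma_k\hat X\hat\sigma_k$, and use the spectral resolution $\hat\rho^{T_1}=\lambda^{-}_\rho\proj{v_\rho}{v_\rho}+\hat P_\rho$ with $\hat P_\rho\ge 0$ supported on $\ket{v_\rho}^{\perp}$, so that
\begin{align}
(\mathcal{E}_{\vec p}\otimes\mathbb{I})\hat\rho^{T_1}=\lambda^{-}_\rho\,\hat\omega_\rho+(\mathcal{E}_{\vec p}\otimes\mathbb{I})\hat P_\rho,
\nonumber
\end{align}
where $\hat\omega_\rho:=(\mathcal{E}_{\vec p}\otimes\mathbb{I})(\proj{v_\rho}{v_\rho})$ is a density operator and $(\mathcal{E}_{\vec p}\otimes\mathbb{I})\hat P_\rho\ge0$. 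Evaluating Rayleigh quotients in $\ket{v_\rho}$ (and in the negative eigenvector of the output) and using that each $\mathcal{U}_k\otimes\mathbb{I}$ preserves the spectrum of $\hat\rho^{T_1}$ yields, via Weyl's inequality and the single‑negative‑eigenvalue constraint, matching upper and lower estimates for $\lambda^{-}$ of the image in terms of $\lambda^{-}_\rho$, the probabilities $p_k$, and the overlaps of $\ket{v_\rho}$ with the fixed structure of $\mathcal{E}_{\vec p}$; the same for $\hat\tau$. To close I would run an interpolation argument along $\mathcal{E}^{(s)}_{\vec p}:=(1-s)\,\mathrm{id}+s\,\mathcal{E}_{\vec p}$, $s\in[0,1]$: along this segment the negativity of $(\mathcal{E}^{(s)}_{\vec p}\otimes\mathbb{I})\hat\rho$ is a convex function of $s$ (convexity of negativity on a line segment of states), hence $s\mapsto\lambda^{-}[(\mathcal{E}^{(s)}_{\vec p}\otimes\mathbb{I})\hat\rho^{T_1}]$ is concave and real‑analytic where it is negative, with Hellmann--Feynman derivative equal to the expectation of $(\mathcal{E}_{\vec p}\otimes\mathbb{I})\hat\rho^{T_1}-\hat\rho^{T_1}$ in the current negative eigenvector; if the order ever reversed there would be a first $s^{*}$ at which the two curves meet at a negative value, and the one‑sided derivative inequality forced there, combined with the estimates above, yields a contradiction.

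The main obstacle is exactly this last step. Concavity alone does not forbid two such curves from crossing, so the argument genuinely has to exploit how $\mathcal{E}_{\vec p}$ acts on a two‑qubit partial transpose — in particular the way the unique negative eigenvector $\ket{v_\rho}$, regarded as a two‑qubit vector, is moved by the local Pauli maps — in order to pin down the sign of the derivative at the crossing point, and to handle the parameter values where the eigenvalue touches $0$ and the curve is only continuous. Everything else (the commutation identity, the reduction to the eigenvalue statement, the convex decomposition of $\mathcal{E}_{\vec p}$, and the Weyl/Rayleigh estimates) is routine; the quantitative eigenvalue comparison carries the weight of the proof.
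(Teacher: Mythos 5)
Your setup (commutation of partial transposition with a Pauli channel up to signs, reduction to a comparison of the unique negative eigenvalue of $\hat\rho^{T_1}$, the convex decomposition of $\mathcal{E}_{\vec p}$ into Pauli conjugations) is sound and consistent with what the paper implicitly uses. But the proof does not close, and you say so yourself: the interpolation--crossing argument along $\mathcal{E}^{(s)}_{\vec p}=(1-s)\,\mathrm{id}+s\,\mathcal{E}_{\vec p}$ is the load-bearing step, and nothing you have written forces a contradiction at a first crossing point $s^{*}$. At $s^{*}$ the two negative eigenvalues coincide, so the quantity that drives the comparison (the gap between them) has vanished, and the Hellmann--Feynman derivatives there involve the expectation values of $(\mathcal{E}_{\vec p}\otimes\mathbb{I})\hat\rho^{T_1}-\hat\rho^{T_1}$ in two \emph{different} eigenvectors $\ket{v_\rho(s^{*})}$ and $\ket{v_\tau(s^{*})}$; concavity of each curve and the unspecified ``Weyl/Rayleigh estimates'' do not determine the relative sign of these two derivatives. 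Since this is exactly the step you flag as the ``main obstacle,'' the proposal is a program rather than a proof: the theorem's content is precisely the quantitative eigenvalue comparison you leave open.

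For contrast, the paper avoids the degenerate crossing configuration altogether. It first proves the statement for an infinitesimal Pauli channel by first-order perturbation theory on the simple, isolated negative eigenvalue: writing $\hat\rho^{T_2}=\tfrac{1}{4}+\hat R$, the perturbed negative eigenvalue shifts by $-\epsilon\,(\mathcal{M}(\hat\rho)+2s_\rho)$ with $s_\rho=\bra{G_\rho}\hat S\ket{G_\rho}$ bounded, so the post-channel gap is $dM-\epsilon\,(dM+2\,ds)$, which stays strictly positive for $\epsilon$ small compared with the \emph{current, strictly positive} gap $dM$. It then writes $\mathcal{E}_{\vec p}$ as an $L$-fold \emph{composition} $\mathcal{E}_{\vec\epsilon}\circ\cdots\circ\mathcal{E}_{\vec\epsilon}$ of identical small Pauli channels (with $\vec\epsilon$ given explicitly in terms of $q_{ij}^{1/L}$) and iterates the strict-order-preserving step finitely many times. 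Because each step starts from a strict inequality, the paper never has to analyze a point where the two curves touch --- which is exactly the configuration your continuous interpolation cannot handle. If you want to salvage your route, you would need either the paper's explicit derivative formula (or an equivalent) to control the sign at the putative crossing, or to replace the straight-line interpolation in probability space by a composition chain so that the comparison is always made between states with a strictly positive entanglement gap.
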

\begin{proof}
    The partial transpose of $\hat{\rho}$ at second qubit can be written as
    \begin{align}
        \hat{\rho}^{T_2}
        =\frac{1}{4}+\hat{R},
    \nonumber
    \end{align}
    where $\hat{R}
    :=
    \sum_{i=1}^{3}r_{i,0}\hat{\sigma}_i
    \otimes\hat{\sigma}_0
    +\hat{S}+\sum_{i,j=1}^{3}
    r_{i,j}\hat{\sigma}_i
    \otimes\hat{\sigma}_j$ and $\hat{S}:=\sum_{i=1}^{3}r_{0,i}
    \hat{\sigma}_0
    \otimes\hat{\sigma}_i$.
    The first part of the proof is to show that the theorem holds for a small magnitude of probabilities.
    We provide a specific example for depolarized channels because can be easily generalized to all channels.
    Suppose a depolarizing channel $\mathcal{E}^{\mathrm{dep}}_{\epsilon}$ with $\epsilon \ll 1$.
    Then the partial transpose of $\hat{\rho}$ after depolarizing the channel at second qubit becomes
    \begin{align}
        \Bigr[\Bigl(\mathcal{E}^{\mathrm{dep}}_{\epsilon}
                \otimes\mathbb{I}\Bigl)
        \hat{\rho}
        \Bigl]^{T_2}
        &=\frac{1}{4}+(1-\epsilon)
        \hat{R}
            +\epsilon \hat{S}.
    \nonumber
    \end{align}
    Given that partial transpose of the mixed state has only single negative eigenvalue for two-qubit system \cite{HORODECKI1996separability}, the gap between other eigenvalues is finite at the region when entanglement is larger than zero. This means that the degenerate case only takes place when all the eigenvalues are positive, which is not the region we are interested in. This is the reason why Eq. (\ref{eq:thm:cond_pos_ent}) is needed as the condition.
    Thus, we can always find a small value of $\epsilon$ for which the first order perturbation theory works \cite{lowdin1951note}.
    Let $\ket{G_{\rho}}$ be the eigenvector of the negative eigenvalue that is one and half that of the entanglement. It satisfies
    \begin{align}
        \hat{\rho}
        ^{T_2}\ket{G_{\rho}}
        =-\frac{1}{2}\mathcal{M}(\hat{\rho})\ket{G_{\rho}}.
        \nonumber
    \end{align}
    Then the entanglement of the state through the depolarizing channel is approximated by
    \begin{align}
        \mathcal{M}\Bigl[\Bigl(\mathcal{E}^{\mathrm{dep}}_{p}
                \otimes\mathbb{I}\Bigl)\hat{\rho}\Bigr]
%        &\approx\mathcal{M}(\hat{\rho}_A)
%            +2\epsilon(2M_A-n_A)\\
%        &=\mathcal{M}(\hat{\rho}_A)
%            +2\epsilon(-\frac{1}{2}-\mathcal{M}(\hat{\rho}_A)-n_A)\\
%        &=\mathcal{M}(\hat{\rho}_A)
%            -\epsilon(1+2\mathcal{M}(\hat{\rho}_A)+2n_A),
        &\approx\mathcal{M}(\hat{\rho})
            -
            \epsilon
            (\mathcal{M}(\hat{\rho})+2s)
            ,
        \nonumber
    \end{align}
    where $s_{\rho}
    :=\bra{G_{\rho}}
    \hat{S}
    \ket{G_{\rho}}$.
    
    Subsequently, the difference in entanglement between 
    $\hat{\rho}$
    and 
    $\hat{\tau}$
    through the depolarizing channel is given by
    \begin{align}
        \mathcal{M}\Bigl[\Bigl(\mathcal{E}^{\mathrm{dep}}_{p}
        \otimes\mathbb{I}\Bigl)
        \hat{\rho}
        \Bigr]
        -
        \mathcal{M}\Bigl[\Bigl(\mathcal{E}^{\mathrm{dep}}_{p}
        \otimes\mathbb{I}\Bigl)
        \hat{\tau}
        \Bigr]
        \approx
        dM-
            \epsilon
            (dM+2ds)
        ,
        \nonumber
    \end{align}
    where $dM:=\mathcal{M}(
    \hat{\rho}
    )-\mathcal{M}(
    \hat{\tau}
    )$ and $ds:=s_{\rho}
    -s_{\tau}$.
    Since $ds$ is finite and $dM$ is positive  according to Eq. (\ref{eq:thm:ent_cond}), we can choose a sufficiently small value of $\epsilon$ that always makes the difference positive. 

    The same conclusion holds for the other Pauli channel $\mathcal{E}_{\vec{\epsilon}}$ with small channel probabilities $|\vec{\epsilon}\;|\ll1$, given that the magnitude of $\vec{\epsilon}$ is finite.
    Therefore, we conclude that
    \begin{align}
    \label{eq:thm:small_ord_prv}
        \mathcal{M}\left[
        \Big(
        \mathcal{E}_{\vec{\epsilon}}
        \otimes\mathbb{I}
        \Big)
        \hat{\rho}
        \right]
        >
        \mathcal{M}\left[
        \Big(
        \mathcal{E}_{\vec{\epsilon}}
        \otimes\mathbb{I}
        \Big)
        \hat{\tau}
        \right].
    \end{align}
    
    The second part of the proof is to extend Eq. (\ref{eq:thm:small_ord_prv}) to the case of large probabilities.
    A Pauli channel can always be decomposed into a chain of equivalent Pauli channels as
    \begin{align}
        \mathcal{E}_{\vec{p}}
        =
        \underbrace{
        \mathcal{E}_{\vec{\epsilon}}\circ
        \mathcal{E}_{\vec{\epsilon}}\circ
        \ldots\circ\mathcal{E}_{\vec{\epsilon}}
        }_{L \mathrm{times}}
    \nonumber
    \end{align}
    with
    \begin{align}
        \epsilon_1&=\frac{1}{4}
        \left[
        1+q_{23}^{\frac{1}{L}}
        -q_{31}^{\frac{1}{L}}
        -q_{12}^{\frac{1}{L}}
        \right],
        \nonumber
        \\
        \epsilon_2&=\frac{1}{4}
        \left[
        1-q_{23}^{\frac{1}{L}}
        +q_{31}^{\frac{1}{L}}
        -q_{12}^{\frac{1}{L}}
        \right],
        \nonumber
        \\
        \epsilon_3&=\frac{1}{4}
        \left[
        1-q_{23}^{\frac{1}{L}}
        -q_{31}^{\frac{1}{L}}
        +q_{12}^{\frac{1}{L}}
        \right],
    \nonumber
    \end{align}
where $q_{ij}:=1-(p_i+p_j)/2$. By increasing $L$, we can choose $\vec{\epsilon}$ to be small enough to satisfy Eq. (\ref{eq:thm:small_ord_prv}).
Finally, we conclude that
    \begin{align}
        \mathcal{M}
        \Bigr[
        (\mathcal{E}_{\vec{p}}\otimes\mathbb{I})
        \hat{\rho}
        \Bigr]
        >
        \mathcal{M}
        \Bigr[
        (\mathcal{E}_{\vec{p}}\otimes\mathbb{I})
        \hat{\tau}
        \Bigr]
        \nonumber
    \end{align}
by iterating the first part of the proof $L$ times.
\end{proof}

Theorem \ref{th:ord_pre_pl} can be extended to encompass local Pauli channels, resulting in the following corollary:
    \begin{corollary}
    \label{cor:ord_pre_local_pl}
    Theorem \ref{th:ord_pre_pl} holds even when a local Pauli channel, denoted as $\mathcal{E}_{\vec{p}}\otimes\mathcal{E}_{\vec{p}'}$, is used instead of $\mathcal{E}_{\vec{p}}\otimes\mathbb{I}$.
    \end{corollary}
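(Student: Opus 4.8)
The plan is to write the two-sided local Pauli channel as a composition of two one-sided ones,
\[
\mathcal{E}_{\vec{p}}\otimes\mathcal{E}_{\vec{p}'}
=\bigl(\mathcal{E}_{\vec{p}}\otimes\mathbb{I}\bigr)\circ\bigl(\mathbb{I}\otimes\mathcal{E}_{\vec{p}'}\bigr),
\]
and then invoke Theorem~\ref{th:ord_pre_pl} once for each factor. The only ingredient that is not literally Theorem~\ref{th:ord_pre_pl} is its ``mirrored'' form, in which the Pauli channel acts on the \emph{second} qubit; I would obtain this for free from the symmetry of $\mathcal{M}$ under exchange of the two subsystems. Indeed, for Hermitian $\hat{\rho}$ one has $\hat{\rho}^{T_2}=\overline{\hat{\rho}^{T_1}}$, so $\hat{\rho}^{T_1}$ and $\hat{\rho}^{T_2}$ share the same (real) spectrum; hence $\mathcal{M}$ is invariant under conjugation by $\mathrm{SWAP}$. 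Applying Theorem~\ref{th:ord_pre_pl} to $\mathrm{SWAP}\,\hat{\rho}\,\mathrm{SWAP}$ and $\mathrm{SWAP}\,\hat{\tau}\,\mathrm{SWAP}$ and conjugating back yields: if $\mathcal{M}(\hat{\rho})>\mathcal{M}(\hat{\tau})>0$, then $\mathcal{M}\bigl[(\mathbb{I}\otimes\mathcal{E}_{\vec{p}'})\hat{\rho}\bigr]>\mathcal{M}\bigl[(\mathbb{I}\otimes\mathcal{E}_{\vec{p}'})\hat{\tau}\bigr]$ whenever $\mathcal{M}\bigl[(\mathbb{I}\otimes\mathcal{E}_{\vec{p}'})\hat{\rho}\bigr]>0$.

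Now assume the corollary's hypotheses, namely $\mathcal{M}(\hat{\rho})>\mathcal{M}(\hat{\tau})>0$ together with $\mathcal{M}\bigl[(\mathcal{E}_{\vec{p}}\otimes\mathcal{E}_{\vec{p}'})\hat{\rho}\bigr]>0$, and set $\hat{\rho}':=(\mathbb{I}\otimes\mathcal{E}_{\vec{p}'})\hat{\rho}$ and $\hat{\tau}':=(\mathbb{I}\otimes\mathcal{E}_{\vec{p}'})\hat{\tau}$. First I would check $\mathcal{M}(\hat{\rho}')>0$: if not, then $\hat{\rho}'$ is PPT hence separable on two qubits, so $(\mathcal{E}_{\vec{p}}\otimes\mathbb{I})\hat{\rho}'=(\mathcal{E}_{\vec{p}}\otimes\mathcal{E}_{\vec{p}'})\hat{\rho}$ is separable too, contradicting the third hypothesis. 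Hence the mirrored statement applies and gives $\mathcal{M}(\hat{\rho}')>\mathcal{M}(\hat{\tau}')$. Then I would split into two cases. If $\mathcal{M}(\hat{\tau}')=0$, then $\hat{\tau}'$ is separable, so $(\mathcal{E}_{\vec{p}}\otimes\mathcal{E}_{\vec{p}'})\hat{\tau}$ is separable and $\mathcal{M}\bigl[(\mathcal{E}_{\vec{p}}\otimes\mathcal{E}_{\vec{p}'})\hat{\tau}\bigr]=0<\mathcal{M}\bigl[(\mathcal{E}_{\vec{p}}\otimes\mathcal{E}_{\vec{p}'})\hat{\rho}\bigr]$, so the claim is immediate. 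If instead $\mathcal{M}(\hat{\tau}')>0$, then $\mathcal{M}(\hat{\rho}')>\mathcal{M}(\hat{\tau}')>0$ and $\mathcal{M}\bigl[(\mathcal{E}_{\vec{p}}\otimes\mathbb{I})\hat{\rho}'\bigr]=\mathcal{M}\bigl[(\mathcal{E}_{\vec{p}}\otimes\mathcal{E}_{\vec{p}'})\hat{\rho}\bigr]>0$, so Theorem~\ref{th:ord_pre_pl} applied to $\hat{\rho}',\hat{\tau}'$ with channel $\mathcal{E}_{\vec{p}}\otimes\mathbb{I}$ gives exactly $\mathcal{M}\bigl[(\mathcal{E}_{\vec{p}}\otimes\mathcal{E}_{\vec{p}'})\hat{\rho}\bigr]>\mathcal{M}\bigl[(\mathcal{E}_{\vec{p}}\otimes\mathcal{E}_{\vec{p}'})\hat{\tau}\bigr]$, which is the corollary.

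The main obstacle here is not conceptual depth but careful bookkeeping of the positivity side-conditions $\mathcal{M}[\,\cdot\,]>0$ as they are threaded through the two applications of Theorem~\ref{th:ord_pre_pl}: one must notice that the case in which the first factor already drives $\hat{\tau}$ into the separable set has to be handled on its own (there the inequality is trivial, since local channels preserve separability), and that in the complementary case the positivity hypothesis needed for the second application is precisely the corollary's assumption $\mathcal{M}[(\mathcal{E}_{\vec{p}}\otimes\mathcal{E}_{\vec{p}'})\hat{\rho}]>0$. The other small point to get right is the clean statement of the $\mathrm{SWAP}$-symmetry of $\mathcal{M}$ that licenses the mirrored form of the theorem.
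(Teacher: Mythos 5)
Your proposal follows essentially the same route as the paper: decompose $\mathcal{E}_{\vec{p}}\otimes\mathcal{E}_{\vec{p}'}$ as $(\mathcal{E}_{\vec{p}}\otimes\mathbb{I})\circ(\mathbb{I}\otimes\mathcal{E}_{\vec{p}'})$ and apply Theorem~\ref{th:ord_pre_pl} to each factor in turn. Your version is in fact more careful than the paper's one-line argument, since you explicitly justify the mirrored (second-qubit) form of the theorem via the SWAP symmetry of $\mathcal{M}$ and track the positivity side-conditions through both applications, details the paper leaves implicit.
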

    \begin{proof}
    This can be easily demonstrated given that an arbitrary local Pauli channel can be decomposed as follows
    \begin{align}
        \mathcal{E}_{\vec{p}}
        \otimes\mathcal{E}_{\vec{p}'}
        =
        \Bigl(\mathcal{E}_{\vec{p}}
        \otimes\mathbb{I}\Bigr)
        \circ
        \Bigl(\mathbb{I}
        \otimes\mathcal{E}_{\vec{p}'}\Bigr).
    \nonumber
    \end{align}
    We end the proof by applying Theorem \ref{th:ord_pre_pl} to the channel $\mathcal{E}_{\vec{p}'}$ acting on the second qubit and subsequently to the channel $\mathcal{E}_{\vec{p}}$ through the first qubit. 
    \end{proof}
In the subsequent sections, we apply corollary \ref{cor:ord_pre_local_pl} to states affected by different Pauli noises, where the average probabilities of noises are same. It enhances the robustness of the numerical results regarding the entanglement boundaries influenced by the Pauli channel, as presented in section \ref{sec:sub:bnd_ent}. Furthermore, it establishes the relationship between these boundaries and those induced by noisy PBET, which is seen in section \ref{sec:sub:ent_of_pbt}.

\subsection{Upper and lower bounds of entanglement reduction}
\label{sec:sub:bnd_ent}
%In this section, we investigate the upper and lower bound of entanglement of bipartite pure state through local Pauli channel.

    \begin{figure}
        \centering
        \begin{subfigure}{.45\textwidth}
        \centering
        \includegraphics[width=\linewidth]
        {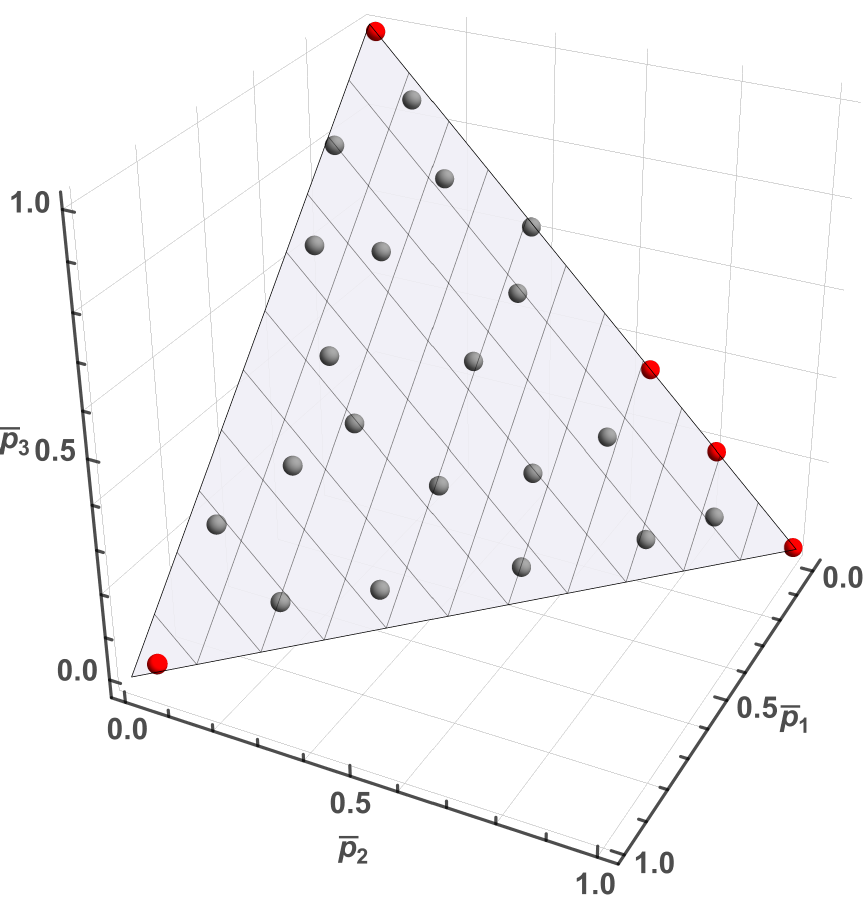}
        \caption{}
        \label{fig:PL_data_P}
        \end{subfigure}
    \hfill
        \begin{subfigure}{.4\textwidth}
        \centering
        \includegraphics[width=\linewidth]
        {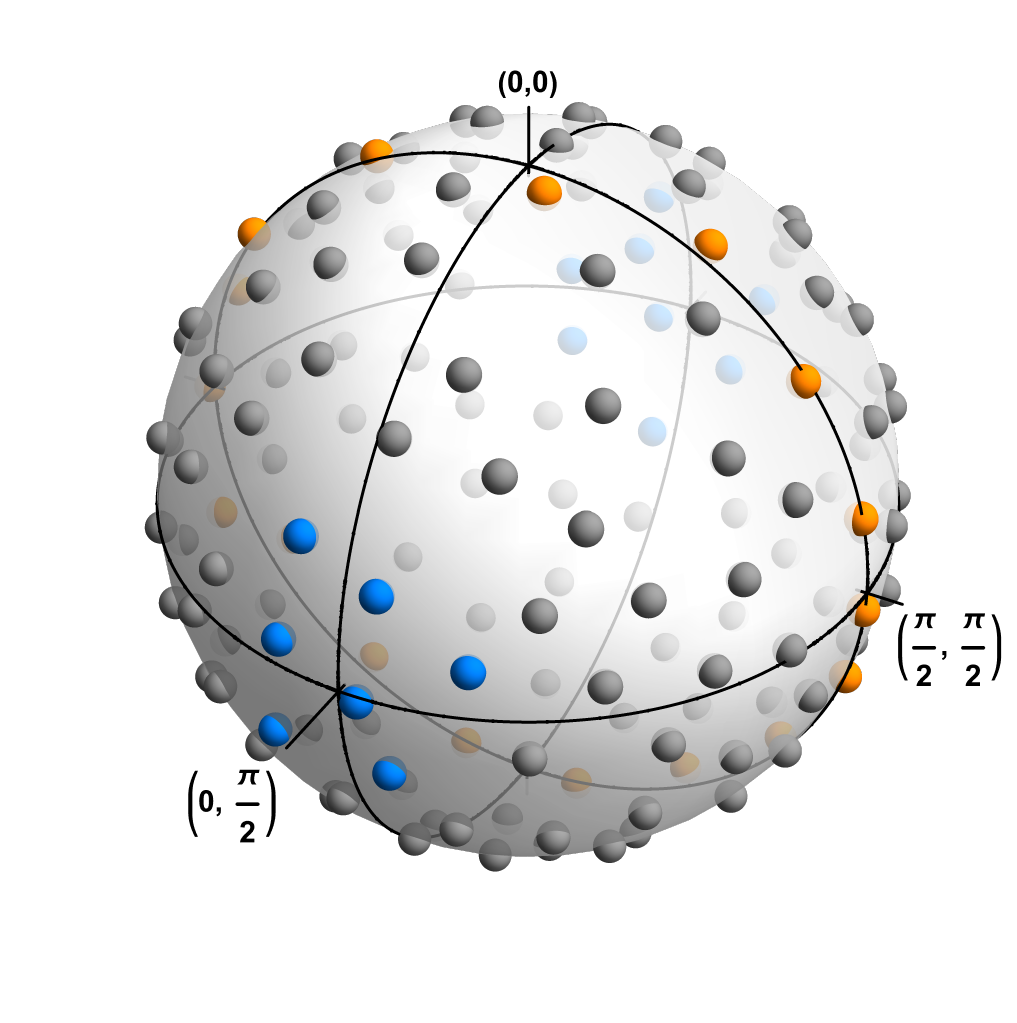}
        \caption{}
        \label{fig:PL_data_U}
        \end{subfigure}
        \caption{
        Sample sets of variables that represent arbitrary two-qubit state to find the boundaries of entanglement.
        (a) Randomly distributed sample set of relative channel probabilities $\bar{p}_i=p_{i}/\Omega\;(i=1,2,3)$, normalized to 1 with 25 elements as dots. It is represented on a plane with the sum of all terms equal to $1$.The red dots correspond to probabilities associated with the lowest $1000$ and highest $1000$ entanglements of possible states at the critical value.
        (b) Randomly distributed sample set of the first two Euler angles of unitary matrices for the first and second qubits, amounting to a total of $150$ elements as dots. It is represented over a shell with a radius of $1$. Blue dots represent angles associated with the lowest $1000$ entanglement values, while orange dots correspond to angles linked to the highest $1000$ entanglement values at the critical value.
        }
        \label{figs:PL_data}
    \end{figure}
    
Schmidt decomposition allows representing any two-qubit pure state as follows:
    \begin{align}
    \ket{\phi_z(\theta)}=\cos\left(\frac{\theta}{2}\right)
    \ket{0}\otimes\ket{0}
    +\sin\left(\frac{\theta}{2}\right)
    \ket{1}\otimes\ket{1}
    \label{eq:sts_pure_z}
    \end{align}
via local unitary transformation, where $0\le\theta\le\frac{\pi}{2}$. Therefore, we can express any two-qubit pure state as
    \begin{align}
        \ket{\phi}=\left[
            \hat{U}(\vec{\alpha})\otimes\hat{V}(\vec{\beta})
        \right]
        \ket{\phi_z(\theta)},
    \label{eq:sts_pure_arb}
    \end{align}
where $\hat{U}(\vec{\alpha})$ and $\hat{V}(\vec{\beta})$ represent Euler rotations with Euler angles $\vec{\alpha}:=(\alpha_1,\alpha_2,\gamma)$ and $\vec{\beta}:=(\beta_1,\beta_2,0)$, respectively.
We define the last term in $\vec{\beta}$ as zero because the z-axis rotation acting on the second qubit and the one acting on the first qubit yield the same result when applied to Eq. (\ref{eq:sts_pure_z}).
By exploiting Eq. (\ref{eq:measure_of_entanglement}), the measure of entanglement for state $\ket{\phi}$ is expressed as
    \begin{align}
    \label{eq:int_ent_single}
        M_0:=\mathcal{M}(\ket{\phi}\bra{\phi})=2\sin\theta,
    \end{align}
which is independent of the Euler angles.

For clarity, we assume that the two qubits are far apart but interact with the same environment, resulting in the same Pauli channel acting on both qubits separately.
According to this assumption, we define the entanglement of the state interacting with the environment as
    \begin{align}
        \mathcal{M}
        \Bigl[
            \left(
                \mathcal{E}_{\vec{p}}\otimes\mathcal{E}_{\vec{p}}
            \right)
            \ket{\phi}\bra{\phi}
        \Bigr].
    \end{align}

Fig. \ref{fig:PL_entanglement_3D} illustrates the upper and lower bounds of reduced entanglement through the Pauli channel concerning the initial entanglement $\mathcal{M}_0$ and average of channel probabilities $\Omega$.
Note that the condition for all possible states to exhibit non-zero entanglement occurs when the initial entanglement surpasses a critical value, denoted as $M_{\mathrm{low}}^c$, given by
    \begin{align}
        M_{\mathrm{low}}^c=
        -\frac{1}{2}
        \left(
        \left|1-\frac{3}{2}\Omega\right|
        -
        \left|1-\frac{3}{2}\Omega\right|^{-1}
        \right).
    \label{eq:cri_val}
    \end{align}
This critical value is depicted by the red line in Fig. \ref{fig:PL_entanglement_3D}. 

To determine the upper and lower bounds of the reduced entanglement precisely at this critical value, we generated a sample of $(\alpha_1,\alpha_2)$ and $(\beta_1,\beta_2)$ uniformly distributed over a shell with a radius of $1$. The sample is illustrated in Fig. \ref{fig:PL_data_U} utilizing polar coordinates. 
The angle $\gamma$ was sampled within the range of $0$ to $2\pi$ with intervals of $\pi/3$. Additionally, we used a sample with channel probabilities uniformly distributed over the plane $p_1 + p_2 + p_3 = 3\Omega$, as depicted in Fig. \ref{fig:PL_data_P}. We examined the measure of entanglement for all possible combinations of the samples we specified.

In Fig. \ref{fig:PL_data_P}, we highlight the channel probabilities of the lowest $1000$ and highest $1000$ entanglements for $\theta=0.2,0.4,0.6,0.8,1$ as red dots. From a more focused search within the data range of red dots, we discovered that the bit flip, bit-phase flip, and phase flip channels correspond to the channels of lower and upper bounds of entanglement.
    
Focusing on the phase flip channel $\mathcal{E}_p^3$, we examined $(\alpha_1,\alpha_2)$ and $(\beta_1,\beta_2)$ for the lowest $1000$ entanglements. These values are shown as blue dots in Fig. \ref{fig:PL_data_U}, while the values for the highest $1000$ entanglements are represented as orange dots.
From a more focused search, we found that the state corresponding to the lower bound is given by
    \begin{align}\label{eq:sts_low}%checked 20230807
        \cos\left(\frac{\theta}{2}\right)\ket{+}\ket{+}
        +i\sin\left(\frac{\theta}{2}\right)\ket{-}\ket{-},
    \end{align}
whereas the state corresponding to the upper bound is given by
    \begin{align}\label{eq:sts_up}%checked 20230807
        \cos\left(\frac{\theta}{2}\right)\ket{0}\ket{0}
        +\sin\left(\frac{\theta}{2}\right)\ket{1}\ket{1},
    \end{align}

where $\hat{\sigma}_1\ket{\pm}=\pm\ket{\pm}$.
Fig. \ref{fig:PL_data_U} shows that the orange dots form a band, indicating that Eq. (\ref{eq:sts_low}) and (\ref{eq:sts_up}) are not unique.
Given that the phase flip channel lacks $\hat{\sigma}_1$ and $\hat{\sigma}_2$ as Kraus operators, the measure of entanglement of its states remains invariant under a single-qubit z-axis rotation. 
Furthermore, by applying a unitary matrix to the states that transform the basis of $\hat{\sigma}_3$ to the basis of $\hat{\sigma}_1$ or $\hat{\sigma}_2$ , we can easily transform them into the boundary states of the bit or bit-phase flip channel, respectively.

Furthermore, we find that the lower and upper bounds obtained through the same numerical method in a region smaller than the critical value are entirely identical. We present the following theorem to support and extend the numerical results to the entire domain:
    \begin{theorem}
    \label{th:ent_single}
    Let $\mathcal{D}$ be a set of elements of two-qubit density matrices.
    Let $\mathcal{C}_\omega$ and $\mathcal{C}_\Omega$ be a set of single qubit Pauli channel with equivalent average of channel probabilities $\omega$ and $\Omega$ smaller than $2/3$, respectively.
    Let us define that
    \begin{align}\label{eq:ent_assum_min}
        &\mathcal{M}
        \Bigl[
        (\mathcal{E}_{\vec{p}_\mathrm{m}}
        \otimes\mathbb{I})
        \hat{\rho}_\mathrm{m}
        \Bigr]:=
        \min_{\hat{\rho}\in\mathcal{D},\mathcal{E}\in\mathcal{C}_\Omega}
        \mathcal{M}
        \Bigl[
        (\mathcal{E}
        \otimes\mathbb{I})
        \hat{\rho}
        \Bigr]>0,
        \\
        &\mathcal{M}
        \Bigl[
        (\mathcal{E}_{\vec{p}_\mathrm{M}}
        \otimes\mathbb{I})
        \hat{\rho}_\mathrm{M}
        \Bigr]:=
        \max_{\hat{\rho}\in\mathcal{D},\mathcal{E}\in\mathcal{C}_\Omega}
        \mathcal{M}
        \Bigl[
        (\mathcal{E}
        \otimes\mathbb{I})
        \hat{\rho}
        \Bigr].
    \end{align}    
    Then for all the averages $\omega\in(0,\Omega)$,
    \begin{align}
        \mathcal{M}
        \Bigl[
        (\mathcal{E}_{\vec{q}_\mathrm{m}}
        \otimes\mathbb{I})
        \hat{\rho}_\mathrm{m}
        \Bigr]&=
        \min_{\hat{\rho}\in\mathcal{D},\mathcal{E}\in\mathcal{C}_\omega}
        \mathcal{M}
        \Bigl[
        (\mathcal{E}
        \otimes\mathbb{I})
        \hat{\rho}
        \Bigr],\\
        \mathcal{M}
        \Bigl[
        (\mathcal{E}_{\vec{q}_\mathrm{M}}
        \otimes\mathbb{I})
        \hat{\rho}_\mathrm{M}
        \Bigr]&=
        \max_{\hat{\rho}\in\mathcal{D},\mathcal{E}\in\mathcal{C}_\omega}
        \mathcal{M}
        \Bigl[
        (\mathcal{E}
        \otimes\mathbb{I})
        \hat{\rho}
        \Bigr],
    \end{align}
    where $\vec{q}_\mathrm{m}:=(\omega/\Omega)\vec{p}_\mathrm{m}$ and $\vec{q}_\mathrm{M}:=(\omega/\Omega)\vec{p}_\mathrm{M}$.
    \end{theorem}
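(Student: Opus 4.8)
The plan is to separate the joint optimisation into an optimisation over states followed by one over channels, leaning on Theorem~\ref{th:ord_pre_pl}, and to close the channel part with a convexity-plus-perturbation argument in the noise magnitude.

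\emph{State part.} The hypothesis that the level-$\Omega$ minimum $\mathcal{M}[(\mathcal{E}_{\vec p_\mathrm{m}}\otimes\mathbb{I})\hat\rho_\mathrm{m}]$ is strictly positive forces \emph{every} pair $(\hat\rho,\mathcal{E})\in\mathcal{D}\times\mathcal{C}_\Omega$ to produce a strictly entangled output; in particular $\mathcal{M}(\hat\rho)>0$ for all $\hat\rho\in\mathcal{D}$, so Theorem~\ref{th:ord_pre_pl} applies at every such pair and says that, for a fixed Pauli channel, the output entanglement is strictly increasing in the input entanglement. Consequently $\hat\rho_\mathrm{m}$ must be a least-entangled element of $\mathcal{D}$ and $\hat\rho_\mathrm{M}$ a most-entangled one, characterisations independent of the channel. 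To repeat this at level $\omega$ I first check $\min_{\hat\rho\in\mathcal{D},\mathcal{E}\in\mathcal{C}_\omega}\mathcal{M}[(\mathcal{E}\otimes\mathbb{I})\hat\rho]>0$: every $\mathcal{E}_{\vec q}\in\mathcal{C}_\omega$ obeys $\mathcal{E}^{\mathrm{dep}}_{\delta}\circ\mathcal{E}_{\vec q}\in\mathcal{C}_\Omega$ with $\delta=(\Omega-\omega)/(1-\omega)\in(0,1)$ (a composition of Pauli channels is a Pauli channel, and one checks its average channel probability equals $\Omega$), so since $\mathcal{M}$ is an entanglement measure and hence non-increasing under the local channel $\mathcal{E}^{\mathrm{dep}}_{\delta}\otimes\mathbb{I}$ we get $\mathcal{M}[(\mathcal{E}_{\vec q}\otimes\mathbb{I})\hat\rho]\ge\mathcal{M}[((\mathcal{E}^{\mathrm{dep}}_{\delta}\circ\mathcal{E}_{\vec q})\otimes\mathbb{I})\hat\rho]\ge\mathcal{M}[(\mathcal{E}_{\vec p_\mathrm{m}}\otimes\mathbb{I})\hat\rho_\mathrm{m}]>0$. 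Thus the extremising states at level $\omega$ are again least- and most-entangled elements of $\mathcal{D}$ and may be taken to be $\hat\rho_\mathrm{m}$ and $\hat\rho_\mathrm{M}$ (ties for the extremal value of $\mathcal{M}$, where Theorem~\ref{th:ord_pre_pl} gives no comparison, I would treat by a separate continuity argument).

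\emph{Channel part.} It then remains to prove, with the state fixed to $\hat\rho_\mathrm{m}$, that $\vec q_\mathrm{m}=(\omega/\Omega)\vec p_\mathrm{m}$ minimises $\vec q\mapsto\mathcal{M}[(\mathcal{E}_{\vec q}\otimes\mathbb{I})\hat\rho_\mathrm{m}]$ over $\mathcal{C}_\omega$, and the symmetric statement for $\hat\rho_\mathrm{M}$. The inequalities ``$\le$'' for the minima and ``$\ge$'' for the maxima are immediate from feasibility of the scaled pairs, so the content is that the optimal channel \emph{direction} does not move with the magnitude. Parametrising a level-$\omega$ channel by its direction $\hat n$ (normalised so $\sum_i\hat n_i=3$) and using the identity $\mathcal{E}_{\omega\hat n}=(1-\omega/\Omega)\,\mathbb{I}+(\omega/\Omega)\,\mathcal{E}_{\Omega\hat n}$, the output $(\mathcal{E}_{\omega\hat n}\otimes\mathbb{I})\hat\rho_\mathrm{m}$ is affine in $\omega$ for fixed $\hat n$, so $\omega\mapsto\mathcal{M}[(\mathcal{E}_{\omega\hat n}\otimes\mathbb{I})\hat\rho_\mathrm{m}]$ is convex — here $\mathcal{M}(\hat\rho)=\max(0,-2\lambda_{\min}(\hat\rho^{T_1}))$ is convex because $\hat\rho\mapsto\lambda_{\min}(\hat\rho^{T_1})$ is concave. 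I would then lock the optimal direction by a chain-of-infinitesimal-steps argument parallel to the proof of Theorem~\ref{th:ord_pre_pl}: because for two qubits the partially transposed state carries a single negative eigenvalue with a gap bounded away from zero on the region where entanglement is positive, first-order perturbation theory is legitimate there, so an infinitesimal decrease of $\omega$ neither creates a new minimiser among directions nor moves the optimal magnitude off the endpoint within a fixed direction; iterating over $\Omega=\omega_0>\omega_1>\cdots>\omega_L=\omega$ — consistently, since the successive scalings multiply to $\omega/\Omega$ — propagates the conclusion to all $\omega\in(0,\Omega)$, and reversing the inequalities handles the maximiser.

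I expect this last step to be the main obstacle. Unlike in Theorem~\ref{th:ord_pre_pl}, one cannot simply peel the noise off by composition: Pauli channels of a fixed non-depolarising direction do not form a semigroup, and for some directions there is no Pauli channel $\mathcal{G}$ at all with $\mathcal{E}_{\Omega\hat n}=\mathcal{G}\circ\mathcal{E}_{\omega\hat n}$, so the reduction must be carried by the convex-interpolation identity together with careful control — via the eigenvalue-gap estimate — of how the negative eigenvalue of the partially transposed output, and its eigenvector, move with $\omega$, in particular ruling out level crossings. The tie-breaking among equally extremal states in $\mathcal{D}$ is a secondary loose end.
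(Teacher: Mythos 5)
Your proposal does not close the argument: the two places you set aside as ``loose ends'' are in fact the entire content of the theorem, and the device the paper uses to handle them is the very composition you dismiss as unworkable. First, the split into a state optimisation followed by a channel optimisation cannot carry the proof. Theorem \ref{th:ord_pre_pl} only compares states with \emph{strictly different} entanglement, so it says nothing about which member of a family of equally entangled states extremises the output; in the intended application $\mathcal{D}$ is exactly such a family (the local-unitary orbit of $\ket{\phi_z(\theta)}$, all with $\mathcal{M}=2\sin\theta$), and the output entanglement genuinely varies over that orbit and with the channel --- that is the whole point of the boundary states (\ref{eq:sts_low}) and (\ref{eq:sts_up}). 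So your ``state part'' establishes nothing in the relevant case, and the ``tie-breaking by a separate continuity argument'' is not a secondary issue but the theorem itself: the statement is about the \emph{joint} optimum over pairs $(\hat{\rho},\mathcal{E})$, and the optimal state and optimal channel direction cannot be decoupled a priori. Second, your channel part is only a plan: convexity of $\mathcal{M}$ along the segment $\omega\mapsto(1-\omega/\Omega)\,\mathbb{I}+(\omega/\Omega)\,\mathcal{E}_{\Omega\hat{n}}$ (which is correct) does not by itself prevent the optimal direction $\hat{n}$ from rotating as $\omega$ decreases, and you yourself concede that the step locking the direction is unresolved.

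The paper's proof is a short contradiction argument that treats state and channel together. Suppose some pair $(\hat{\rho},\mathcal{E}_{\vec{q}})$ with $\mathcal{E}_{\vec{q}}\in\mathcal{C}_\omega$ beats $(\hat{\rho}_\mathrm{m},\mathcal{E}_{\vec{q}_\mathrm{m}})$. One explicitly solves a linear system for a Pauli channel $\mathcal{E}_{\vec{r}}$ with $\mathcal{E}_{\vec{r}}\circ\mathcal{E}_{\vec{q}_\mathrm{m}}=\mathcal{E}_{\vec{p}_\mathrm{m}}$ --- the matrix $\mathbf{Q}$ is invertible precisely because $3\omega<2$, which is where the hypothesis $\omega,\Omega<2/3$ enters --- and then applies Theorem \ref{th:ord_pre_pl} not to the inputs but to the two \emph{output} states $(\mathcal{E}_{\vec{q}}\otimes\mathbb{I})\hat{\rho}$ and $(\mathcal{E}_{\vec{q}_\mathrm{m}}\otimes\mathbb{I})\hat{\rho}_\mathrm{m}$ under the further channel $\mathcal{E}_{\vec{r}}$. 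This manufactures a level-$\Omega$ pair strictly beating $(\hat{\rho}_\mathrm{m},\mathcal{E}_{\vec{p}_\mathrm{m}})$, contradicting (\ref{eq:ent_assum_min}). Note that the composition is taken in the direction ``scaled-down optimal channel composed up to the original optimal channel,'' which is available here even though, as you correctly observe, an arbitrary Pauli channel cannot be peeled off an arbitrary larger one. Your observation that a depolarizing layer with parameter $(\Omega-\omega)/(1-\omega)$ lifts any level-$\omega$ channel into $\mathcal{C}_\Omega$, so that the level-$\omega$ minimum is also positive, is correct and a useful complement, but it is not a substitute for the missing step.
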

    \begin{proof}
    We only provide proof for the minimum case, which is analogous to the maximum case.
    Let us assume that there exist a channel $\mathcal{E}_{\vec{q}}\in\mathcal{C}_{\omega}$ and a density matrix $\hat{\rho}\in\mathcal{D}$ that satisfy
    \begin{align}
        \mathcal{M}
        \Bigl[
        \Bigl(\mathcal{E}_{\vec{q}}
        \otimes\mathbb{I}\Bigr)
        \hat{\rho}
        \Bigr]
        <
        \mathcal{M}
        \Bigl[
        \Bigl(\mathcal{E}_{\vec{q}_\mathrm{m}}
        \otimes\mathbb{I}\Bigr)
        \hat{\rho}_\mathrm{m}
        \Bigr].
    \nonumber
    \end{align}
    The channel probabilities of channel $\mathcal{E}_{\vec{r}}$ satisfying $\mathcal{E}_{\vec{r}}\circ\mathcal{E}_{\vec{q}_\mathrm{m}}=\mathcal{E}_{\vec{p}_\mathrm{m}}$ are
    \begin{align}
    \vec{r}=4\mathbf{Q}^{-1}(\vec{p}_\mathrm{m}-\vec{q}_\mathrm{m}),
    \nonumber
    \end{align}
    where
    \begin{align}
        \mathbf{Q}:=
        q_0\mathbf{I}-
        \left[\begin{array}{ccc}
            q_1    &q_1-q_3   &q_1-q_2       \\
            q_2-q_3&q_2       &-q_1+q_2        \\
            -q_2+q_3 &-q_1+q_3    &q_3
        \end{array}\right]
    \nonumber
    \end{align}
    with $\vec{q}_\mathrm{m}:=(q_1,q_2,q_3)$, $q_0:=4-(q_1+q_2+q_3)$ and $\mathbf{I}$ is the identity matrix.
    However, the matrix $\mathbf{Q}$ cannot be inverted when
    \begin{align}
    (q_1+q_2-2)(q_2+q_3-2)(q_3+q_1-2)=0.
    \nonumber
    \end{align}
    Given that we are considering $3\omega=q_1+q_2+q_3<2$ regime, the matrix is always invertible.
    As a result of Theorem \ref{th:ord_pre_pl}, we can derive
    \begin{align}
        \mathcal{M}
        \Bigl[\Bigl((
        \mathcal{E}_{\vec{r}}
        \circ
        \mathcal{E}_{\vec{q}})
        \otimes
        \mathbb{I}\Bigr)
        \hat{\rho}
        \Bigr]
        &<
        \mathcal{M}
        \Bigl[\Bigl((
        \mathcal{E}_{\vec{r}}
        \circ
        \mathcal{E}_{\vec{q}_\mathrm{m}})
        \otimes
        \mathbb{I}\Bigr)
        \hat{\rho}_\mathrm{m}
        \Bigr],
        \nonumber
    \end{align}
which becomes
    \begin{align}
        \mathcal{M}
        \Bigl[\Bigl((
        \mathcal{E}_{\vec{r}}
        \circ
        \mathcal{E}_{\vec{q}})
        \otimes
        \mathbb{I}\Bigr)
        \hat{\rho}
        \Bigr]
        &<
        \mathcal{M}
        \Bigl[
        \Bigl(\mathcal{E}_{\vec{p}_\mathrm{m}}
        \otimes\mathbb{I}\Bigr)
        \hat{\rho}_\mathrm{m}
        \Bigr].
        \nonumber
    \end{align}
    Given that the inequality we  derived restrict Eq. (\ref{eq:ent_assum_min}),
    we can conclude that $\mathcal{M}
        \Bigl[
        \Bigl(\mathcal{E}_{\vec{q}_\mathrm{m}}
        \otimes\mathbb{I}\Bigr)
        \hat{\rho}_\mathrm{m}
        \Bigr]$ is also the minimum. 
    \end{proof}
According to Theorem \ref{th:ent_single}, we obtain the following corollary:
    \begin{corollary}
    \label{cor:bnd_ent}
        Theorem \ref{th:ent_single} can be extended to local Pauli channel.
    \end{corollary}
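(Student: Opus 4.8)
The plan is to mirror the proof of Corollary~\ref{cor:ord_pre_local_pl}: split a local Pauli channel as $\mathcal{E}_{\vec{p}}\otimes\mathcal{E}_{\vec{p}'}=(\mathcal{E}_{\vec{p}}\otimes\mathbb{I})\circ(\mathbb{I}\otimes\mathcal{E}_{\vec{p}'})$ and apply Theorem~\ref{th:ent_single} once with the Pauli channel sitting on the second qubit and once with the Pauli channel sitting on the first. In detail, let $\mathcal{C}_\Omega$ and $\mathcal{C}_{\Omega'}$ be the sets of single-qubit Pauli channels with averages $\Omega,\Omega'<2/3$, and suppose the triple $(\hat{\rho}_\mathrm{m},\mathcal{E}_{\vec{p}_\mathrm{m}},\mathcal{E}_{\vec{p}'_\mathrm{m}})$ attains the minimum of $\mathcal{M}\bigl[(\mathcal{E}\otimes\mathcal{E}')\hat{\rho}\bigr]$ over $\hat{\rho}\in\mathcal{D}$, $\mathcal{E}\in\mathcal{C}_\Omega$, $\mathcal{E}'\in\mathcal{C}_{\Omega'}$, and that this minimum is strictly positive. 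I will show that for every $\omega\in(0,\Omega)$ and $\omega'\in(0,\Omega')$ the minimum of the same functional over $\mathcal{C}_\omega\times\mathcal{C}_{\omega'}\times\mathcal{D}$ is attained by $(\hat{\rho}_\mathrm{m},\mathcal{E}_{\vec{q}_\mathrm{m}},\mathcal{E}_{\vec{q}'_\mathrm{m}})$ with $\vec{q}_\mathrm{m}=(\omega/\Omega)\vec{p}_\mathrm{m}$ and $\vec{q}'_\mathrm{m}=(\omega'/\Omega')\vec{p}'_\mathrm{m}$, the maximum case being identical.

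First I would apply Theorem~\ref{th:ent_single}, in the form obtained by relabeling the two qubits (legitimate since nothing in the statement distinguishes them), to the density-matrix set $\mathcal{D}_1:=\{(\mathcal{E}\otimes\mathbb{I})\hat{\rho}:\mathcal{E}\in\mathcal{C}_\Omega,\ \hat{\rho}\in\mathcal{D}\}$ acted on by a Pauli channel on the second qubit; its minimum at average $\Omega'$ is exactly the assumed positive minimum over $\mathcal{C}_\Omega\times\mathcal{C}_{\Omega'}\times\mathcal{D}$, attained at $(\mathcal{E}_{\vec{p}_\mathrm{m}}\otimes\mathbb{I})\hat{\rho}_\mathrm{m}$ together with $\mathcal{E}_{\vec{p}'_\mathrm{m}}$, so the theorem gives that the minimum over $\mathcal{C}_\Omega\times\mathcal{C}_{\omega'}\times\mathcal{D}$ equals $\mathcal{M}\bigl[(\mathcal{E}_{\vec{p}_\mathrm{m}}\otimes\mathcal{E}_{\vec{q}'_\mathrm{m}})\hat{\rho}_\mathrm{m}\bigr]$. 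Before iterating on the first qubit I must check that this quantity is still strictly positive: since $\omega'<\Omega'<2/3$ one can write $\mathcal{E}_{\vec{p}'_\mathrm{m}}=\mathcal{E}_{\vec{s}}\circ\mathcal{E}_{\vec{q}'_\mathrm{m}}$ for a genuine Pauli channel $\mathcal{E}_{\vec{s}}$ (the ``completion'' computation of Theorem~\ref{th:ent_single}, valid whenever $3\omega'<2$), and because $\mathcal{M}$ is twice the negativity, hence an entanglement monotone, it cannot increase under the local map $\mathbb{I}\otimes\mathcal{E}_{\vec{s}}$, whence $\mathcal{M}\bigl[(\mathcal{E}_{\vec{p}_\mathrm{m}}\otimes\mathcal{E}_{\vec{q}'_\mathrm{m}})\hat{\rho}_\mathrm{m}\bigr]\ge\mathcal{M}\bigl[(\mathcal{E}_{\vec{p}_\mathrm{m}}\otimes\mathcal{E}_{\vec{p}'_\mathrm{m}})\hat{\rho}_\mathrm{m}\bigr]>0$. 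Applying Theorem~\ref{th:ent_single} a second time, now in its stated orientation, to $\mathcal{D}_2:=\{(\mathbb{I}\otimes\mathcal{E}')\hat{\rho}:\mathcal{E}'\in\mathcal{C}_{\omega'},\ \hat{\rho}\in\mathcal{D}\}$ acted on by a Pauli channel on the first qubit --- whose minimum at average $\Omega$ is the positive quantity just obtained, attained at $(\mathbb{I}\otimes\mathcal{E}_{\vec{q}'_\mathrm{m}})\hat{\rho}_\mathrm{m}$ together with $\mathcal{E}_{\vec{p}_\mathrm{m}}$ --- scales that channel down from $\Omega$ to $\omega$ and delivers $\min_{\mathcal{C}_\omega\times\mathcal{C}_{\omega'}\times\mathcal{D}}\mathcal{M}\bigl[(\mathcal{E}\otimes\mathcal{E}')\hat{\rho}\bigr]=\mathcal{M}\bigl[(\mathcal{E}_{\vec{q}_\mathrm{m}}\otimes\mathcal{E}_{\vec{q}'_\mathrm{m}})\hat{\rho}_\mathrm{m}\bigr]$, which is the assertion.

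I expect the main obstacle to be precisely this propagation of strict positivity through the two-step chaining, since Theorem~\ref{th:ent_single} may only be invoked when the relevant extremal entanglement is nonzero; handling it requires combining divisibility of single-qubit Pauli channels --- available throughout because all averages stay below $2/3$, which is exactly the regime in which the $\mathbf{Q}$-type matrices of Theorem~\ref{th:ent_single} are invertible --- with monotonicity of the negativity under local Pauli maps, the remainder being bookkeeping of which channel is the optimizer at each stage. An alternative, closer to the argument of Theorem~\ref{th:ent_single} itself, is to proceed by contradiction on both tensor factors at once: assume some $(\hat{\rho},\mathcal{E}_{\vec{q}},\mathcal{E}_{\vec{q}'})$ beats the scaled triple, construct completion channels $\mathcal{E}_{\vec{r}},\mathcal{E}_{\vec{r}'}$ with $\mathcal{E}_{\vec{r}}\circ\mathcal{E}_{\vec{q}_\mathrm{m}}=\mathcal{E}_{\vec{p}_\mathrm{m}}$ and $\mathcal{E}_{\vec{r}'}\circ\mathcal{E}_{\vec{q}'_\mathrm{m}}=\mathcal{E}_{\vec{p}'_\mathrm{m}}$ (again using $3\omega,3\omega'<2$), and apply Corollary~\ref{cor:ord_pre_local_pl} to the local channel $\mathcal{E}_{\vec{r}}\otimes\mathcal{E}_{\vec{r}'}$ --- whose positivity precondition holds because the negativity only decreases under it and $\mathcal{M}\bigl[(\mathcal{E}_{\vec{p}_\mathrm{m}}\otimes\mathcal{E}_{\vec{p}'_\mathrm{m}})\hat{\rho}_\mathrm{m}\bigr]>0$ --- to contradict the extremality at $(\Omega,\Omega')$.
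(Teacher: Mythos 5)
Your proposal follows essentially the same route as the paper, whose proof is a one-line appeal to the decomposition $\mathcal{E}_{\vec{p}}\otimes\mathcal{E}_{\vec{p}'}=(\mathcal{E}_{\vec{p}}\otimes\mathbb{I})\circ(\mathbb{I}\otimes\mathcal{E}_{\vec{p}'})$ and to applying Theorem~\ref{th:ent_single} on each factor in turn, exactly as in Corollary~\ref{cor:ord_pre_local_pl}. Your additional care in propagating the strict-positivity hypothesis through the intermediate step (via divisibility of Pauli channels below average $2/3$ and monotonicity of the negativity) fills in a detail the paper leaves implicit, but it is the same argument.
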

    \begin{proof}
        The proof follows from Theorem \ref{th:ent_single} following the same approach as Corollary \ref{cor:ord_pre_local_pl}.
    \end{proof}
    
Corollary \ref{cor:bnd_ent} demonstrates that the states and corresponding channels of boundary entanglement obtained at the critical value remain consistent within the regime where the average is smaller than that of the critical value.

    \begin{figure}
        \centering
        \begin{subfigure}{.45\textwidth}
        \centering
        \includegraphics[width=\linewidth]
        {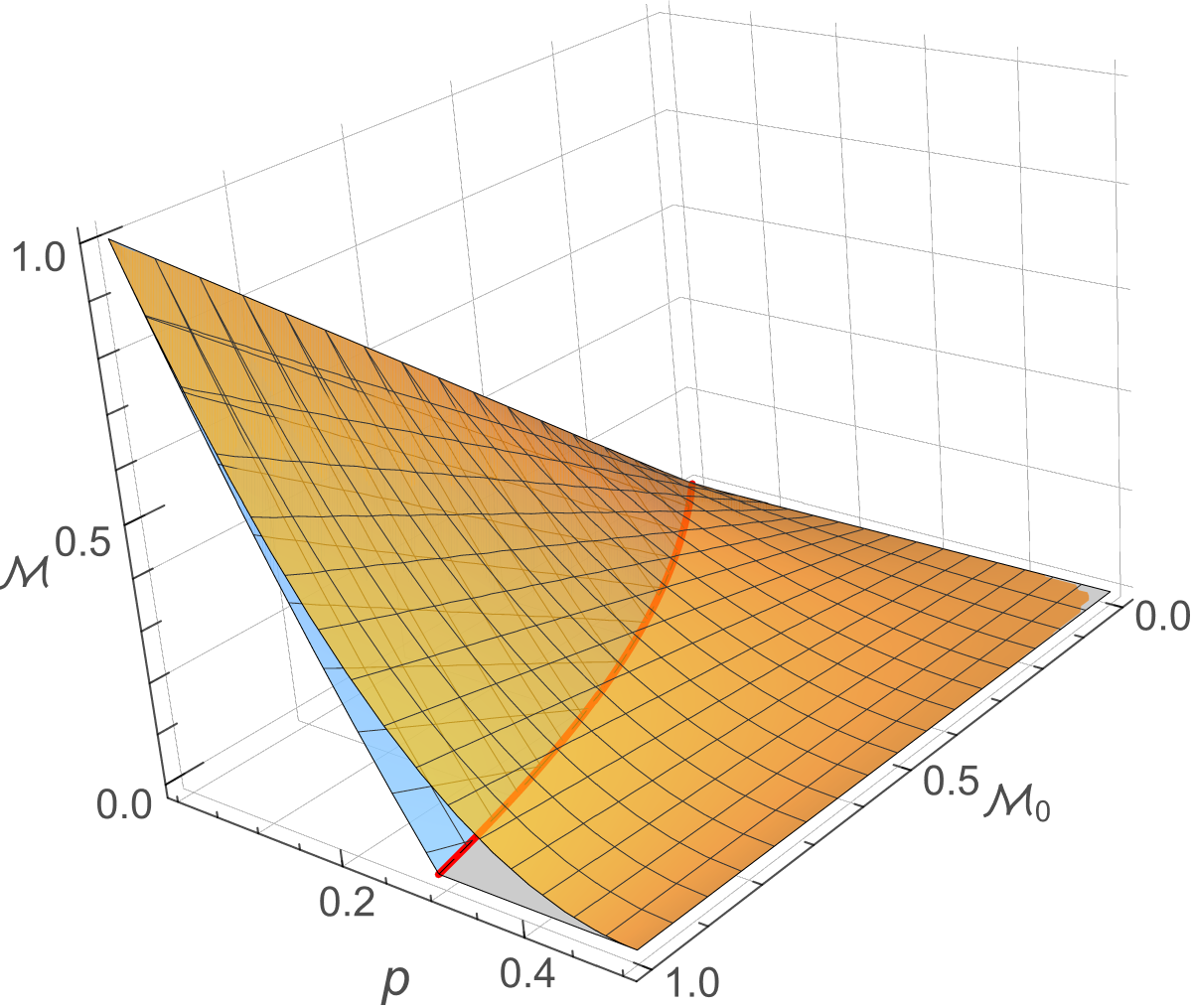}
        \caption{}
        \label{fig:PL_entanglement_3D}
        \end{subfigure}
    \hfill
        \begin{subfigure}{.4\textwidth}
        \centering
        \includegraphics[width=\linewidth]
        {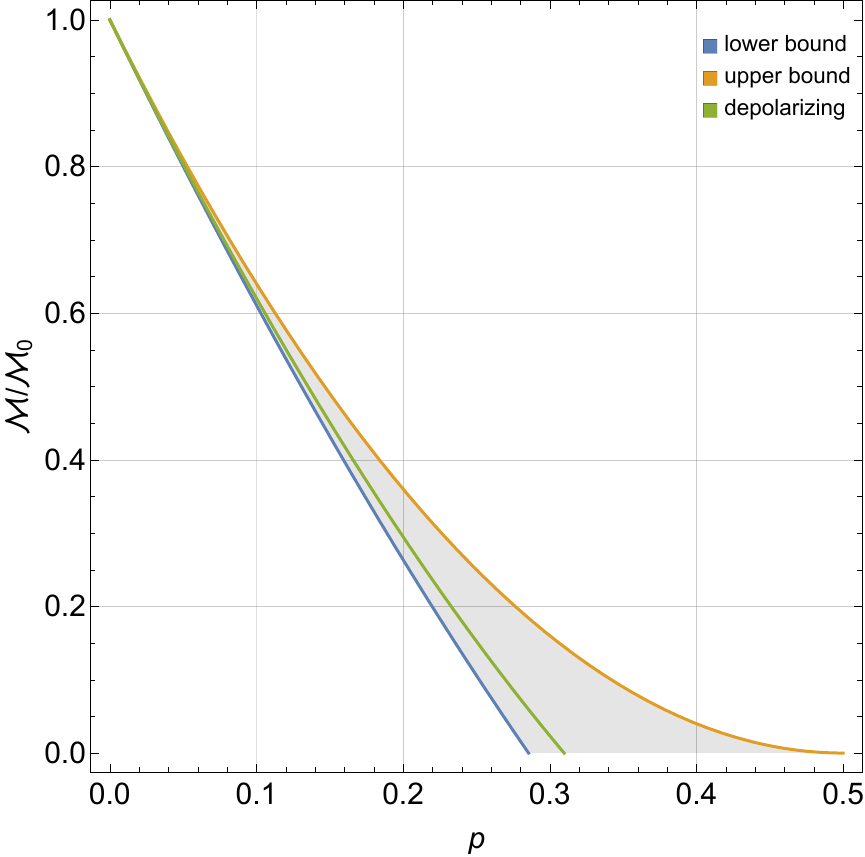}
        \caption{}
        \label{fig:PL_entanglement_2D}
        \end{subfigure}
        \caption{
        (a) Measure of entanglement after passing through a local Pauli channel with respect to entanglement $M_0$ for an initial pure state and average of probabilities for a single qubit Pauli channel. The blue surface corresponds to the lower bound whereas the orange surface corresponds to the upper bound. The red line corresponds to the critical value which represents the boundary where the entanglement of all states is larger than zero.
        (b) Entanglement divided by the initial entanglement against
        the average of the probabilities at $M_0 = 0.8$. Blue, orange, and green lines correspond to entanglement of the lower bound, upper bound, and depolarizing channel, respectively. The gray region corresponds to the possible entanglement that the arbitrary states may have. 
        }
        \label{figs:PL_entanglemet}
    \end{figure}

Fig. \ref{fig:PL_entanglement_3D} presents the upper and lower bounds of reduced entanglement in relation to the entanglement of the initial state and average of channel probabilities.
That the entanglement for the lower bound is $M_{\mathrm{low}}$ given by
    \begin{align}\label{eq:ent_pl_low}%checked 20230807
        M_{\mathrm{low}}=\mathrm{max}
        \left[
        0,
        M_0\left|1-\frac{3}{2}\Omega\right|+\frac{1}{2}
        \left(1-\frac{3}{2}\Omega\right)^2
        -\frac{1}{2}
        \right],
    \end{align}
and for the upper bound $M_{\mathrm{up}}$ is expressed as
    \begin{align}\label{eq:ent_pl_up}%checked 20230807
        M_{\mathrm{up}}=\mathrm{max}
        \left[
        0,M_0\left(1-\frac{3}{2}\Omega\right)^2
        \right].
    \end{align}
These expressions correspond to the entanglements derived from Eq. (\ref{eq:sts_low}) and (\ref{eq:sts_up}) through the phase flip channel, respectively.
We additionally confirmed, employing the same search method, that the lower and upper bounds remain consistent for entanglements larger than the critical value. 

For depolarized channels, it is worth noting that all states with the same initial entanglement will yield the same entanglement measure as each other after passing through the channel.
The reduced entanglement for this channel is calculated from \cite{lee2000entanglement}, which can be expressed using our notation as
    \begin{align}
        M_{\mathrm{dep}}=
        \max\left[0,
        M_0(1-\Omega)^2-\Omega
        \left(
        1-\frac{1}{2}\Omega
        \right)
        \right],
    \end{align}
where Werner state can be represented as a state maximally entangled passing through depolarizing channel.

Fig. \ref{fig:PL_entanglement_2D} shows the relative reduced entanglement plotted against the average of channel probabilities at $\mathcal{M}_0=0.8$.
Notably, the entanglement of the depolarizing channel is near to the lower bound than to the upper bound.

\section{Noisy Port-based Teleportation}
\label{sec:noisy_pbt}
\subsection{Channel description}
\label{sec:sub:ch_descpt}
We first review the ideal case where the resource states are not disturbed by the environment.
Even if maximally entangled states are prepared as a resource state, the unknown state sent to the receiver is deformed owing to the finite number of ports.
It was shown in \cite{ishizaka2009quantum} that the ideal scheme can be described isng a depolarizing channel as
    \begin{align}\label{eq:_PBT_channel_0}
        \Lambda(\hat{\rho})
        =\left(1-q_N\right)
        \frac{\hat{\sigma}_0}{2}
        +
        q_N
        \hat{\rho},
    \end{align}
where $q_N:=2f(\Lambda)-1$ and $f(\Lambda)$ is defined by Eq. (\ref{eq:rel_tel_fid_and_ent_fid}) and (\ref{eq:ent_fid_PBT}).
This result follows from the fact that both the resource states and the measurement operators can be represented as Bell states, as shown in Eq. (\ref{eq:resource_sts}) and (\ref{eq:povm_meas}), given that the Bell state have the property of being invariant to twirling and can be expressed as
    \begin{align}
        \ket{\Psi^0}=
        \left(\hat{U}^*\otimes\hat{U}\right)\ket{\Psi^0},
    \end{align}
where $\hat{U}$ is an arbitrary unitary matrix.

Hereafter, we assume that there is local Pauli noise due to the environment.
Then the noisy resource state is given by
    \begin{align}
    \label{eq:sts_res_pl}
        &\left(
        \mathcal{E}_{\vec{p}}\otimes\mathcal{E}_{\vec{p}}
        \right)
        \ket{\Psi^0}\bra{\Psi^0}
%        \\
%        &\quad=
%        \sum^3_{i=0}\sum^3_{j=0}
%        \frac{p_ip_j}{4}
%        \left(\hat{\sigma}^i\otimes\hat{\sigma}^j\right)
%        \ket{\Psi^0}\bra{\Psi^0}
%        \left(\hat{\sigma}^i\otimes\hat{\sigma}^j\right)\\
%        &\quad
        =\sum^3_{k=0}\frac{\alpha_k}{16}
        \ket{\Psi^k}\bra{\Psi^k}
    \end{align}
with 
    \begin{align}
        \alpha_0:=p_0^2+p_1^2+p_2^2+p_3^2\;&,\;
        \alpha_1:=2
        \left(
        p_0p_1+p_2p_3
        \right),
        \nonumber
        \\
        \alpha_2:=2
        \left(
        p_0p_2+p_3p_1
        \right)\;&,\;
        \alpha_3:=2
        \left(
        p_0p_3+p_1p_2
        \right),
        \nonumber
    \end{align}
and the four possible Bell states are
    \begin{align}
        \ket{\Psi^1}:=
        \frac{\ket{0}\otimes\ket{1}+\ket{1}\otimes\ket{0}}
        {\sqrt{2}},
        \nonumber
        \\
        \ket{\Psi^2}:=
        \frac{\ket{0}\otimes\ket{1}-\ket{1}\otimes\ket{0}}
        {\sqrt{2}i},
        \nonumber
        \\
        \ket{\Psi^3}:=
        \frac{\ket{0}\otimes\ket{0}-\ket{1}\otimes\ket{1}}
        {\sqrt{2}},
        \nonumber
    \end{align}
where the first Bell state is defined by Eq. (\ref{eq:bell_sts}).
Given that Bell states are stabilizer states of two qubits, they can be transformed into each other using local Pauli matrices and local Clifford matrices \cite{nest2004efficient}.
In particular, Bell states $i=1,2,3$ can be represented by applying Pauli matrices on the first qubit of the first Bell state as follows:
    \begin{align}\label{eq:relation_bell}
        \ket{\Psi^i}=
        \left(
        \hat{\sigma}_i\otimes\hat{\sigma}_0
        \right)
        \ket{\Psi^0}.
    \end{align}

Substituting the noisy resource state instead of the noiseless one at the channel defined as Eq. (\ref{eq:PBT_channel}), 
it can be shown that the channel the noisy PBT is
    \begin{align}
        \Lambda_{\vec{p}}\left(\hat{\rho}\right)
        &=
        \sum_{k=0}^3
        \frac{\alpha_k}{16}\hat{\sigma}_k
        \Lambda\left(\hat{\rho}\right)
        \hat{\sigma}_k
        \nonumber\\
        &=(1-q_N)\frac{\hat{\sigma}_0}{2}
        +q_N\sum^3_{k=0}\frac{\alpha_k}{16}
        \hat{\sigma}_k\hat{\rho}\hat{\sigma}_k
        \nonumber\\
        \label{eq:ch_PBT_pl}
        &=\frac{1+3q_N q_{\vec{p}}}{4}
        \hat{\rho}+\sum^3_{j=1}\frac{1-q_Nq^{(j)}_{\vec{p}}}{4}
        \hat{\sigma}_j\hat{\rho}\hat{\sigma}_j,
        \end{align}
where we used Eq. (\ref{eq:sts_res_pl}) and (\ref{eq:relation_bell}) at the first expression. The last expression is represented by
    \begin{align}
        \label{eq:PBT_q_0}
        q_{\vec{p}}&:=\frac{1}{3}\sum_{j=1}^3(1-p_j)^2
        -\frac{1}{12}\left[
        \Delta^2_{12}+\Delta^2_{23}+\Delta^2_{31}
        \right],\\
        q^{(1)}_{\vec{p}}&:=(1-p_1)^2-\frac{1}{4}\left[
        \Delta^2_{12}-\Delta^2_{23}+\Delta^2_{31}
        \right],\\
        q^{(2)}_{\vec{p}}&:=(1-p_2)^2-\frac{1}{4}\left[
        \Delta^2_{12}+\Delta^2_{23}-\Delta^2_{31}
        \right],\\
        q^{(3)}_{\vec{p}}&:=(1-p_3)^2-\frac{1}{4}\left[
        -\Delta^2_{12}+\Delta^2_{23}+\Delta^2_{31}
        \right],
    \end{align}
where $\Delta_{ij}:=|p_i-p_j|$. Note that 
$q_{\vec{p}}$ is the average of channel probabilities $\vec{q}_{\vec{p}}:=\Bigl(q^{(1)}_{\vec{p}},q^{(2)}_{\vec{p}},q^{(3)}_{\vec{p}}\Bigr)$.
From Eq. (\ref{eq:ch_PBT_pl}), we obtain the teleportation fidelity as
    \begin{align}\label{eq:tel_fid_PBT_pl}
        f(\Lambda_{\vec{p}})
        =\frac{1}{2}+\frac{1}{2}q_Nq_{\vec{p}},
    \end{align}
and the entanglement fidelity as
    \begin{align}\label{eq:ent_fid_PBT_pl}
        F(\Lambda_{\vec{p}})
        =\frac{1}{4}+\frac{3}{4}q_Nq_{\vec{p}}.
    \end{align}
It is important to note that both the teleportation fidelity and entanglement fidelity converge to those of standard PBT under noiseless conditions. Furthermore, at the asymptotic limit of $N\rightarrow\infty$, both fidelities approach those of original teleportation using the noisy resource state we used.

The  channel of noisy PBT can be decomposed into a chain of two channels as
    \begin{align}\label{eq:ch_PBT_chain}
        \Lambda_{\vec{p}}(\hat{\rho})
        =
        \left[
        \mathcal{E}^{\mathrm{dep}}_{1-q_N}
        \circ
        \mathcal{E}_{\vec{1}-\vec{q}_{\vec{p}}}
        \right]
        \hat{\rho},
    \end{align}
where $\vec{1}:=(1,1,1)$.
In Eq. (\ref{eq:ch_PBT_chain}), $\mathcal{E}^{\mathrm{dep}}_{1-q_N}$ represents the depolarization due to the limitation in the number of ports, and $\mathcal{E}_{\vec{1}-\vec{q}_{\vec{p}}}$ represents the noise introduced by the noisy resource state.
Given that the depolarizing channel commutes with any other Pauli channel, there is no order between $\mathcal{E}^{\mathrm{dep}}_{1-q_N}$ and $\mathcal{E}_{\vec{1}-\vec{q}_{\vec{p}}}$. Therefore, both losses resulting from different causes can be considered simultaneous. 

In particular, for the depolarizing form of noise at the resource state, the environment noise channel is represented as a depolarizing channel with probability $q_p:=(1-p)^2$ and the entanglement fidelity is represented as
    \begin{align}
        F(\Lambda_{\vec{p}})
        =\frac{1}{4}+\frac{3}{4}q_Nq_{p}.
    \end{align} 
%added new
\subsection{Measure of entanglement}
\label{sec:sub:ent_of_pbt}
Let us consider port-based entanglement teleportation (PBET), the scheme to teleport an unknown two-qubit pure state $\ket{\phi}$ with noisy PBT for each qubit.
If there is no noise,
the measure of entanglement $M^{\mathrm{free}}$ after teleportation is given as
    \begin{align}
        M^{\mathrm{free}}=\mathrm{max}\left[
        0,
        -\frac{1}{2}
        +q_N^2
        \left(M_0+\frac{1}{2}\right)
        \right],
    \label{eq:ent_PBT_ideal}
    \end{align}
where $M_0$ is the measure of entanglement of the prepared unknown state defined as Eq. (\ref{eq:int_ent_single}). We can see that entanglement teleportation works perfectly in the asymptotic limit.

If we consider local depolarizing noise on the resource state in Eq. (\ref{eq:PBT_channel}), the entanglement after teleportation $M^{\mathrm{dep}}_p$ is given by
    \begin{align}
        M^{\mathrm{dep}}_p=\mathrm{max}\left[
        0,-\frac{1}{2}
        +q_N^2
        \left(\frac{1+2N_{\mathrm{dep}}}{3}\right)^2
        \left(M_0+\frac{1}{2}\right)
        \right],\label{eq:ent_PBT_dep}
    \end{align}
where the entanglement of the noisy resource state is expressed as 
    \begin{align}
        N_{\mathrm{dep}}&=
        \mathcal{M}
        \Bigl[
            \left(\mathcal{E}^{\mathrm{dep}}_p
            \otimes\mathcal{E}^{\mathrm{dep}}_p\right)
            \ket{\Psi^0}\bra{\Psi^0}
        \Bigr]
        \nonumber\\
        &=\mathrm{max}\left[0,\frac{3q_p-1}{2}\right].
    \end{align}

We extend the noise of the resource state to Pauli noise.
It is not guaranteed that a phase flip channel exists in a set of constant average of probabilities $\vec{1}-\vec{q}_{\vec{p}}$ with respect to physical probabilities $\vec{p}$.
There are regions where the probabilities of noise $\vec{p}$ do not allow for a channel of environment noise $\mathcal{E}_{\vec{1}-\vec{q}_{\vec{p}}}$ to have a phase flip shape.
The Lagrange multiplier method was applied to determine the range of the average containing the phase flip channel.
Given that $1-q_{\vec{p}}^{(1)}$ is a channel probability, it is not larger than 1.
If the bit flip channel is a possible form, the maximum probability for variables $\vec{p}$
with constraint
    \begin{align}
        q_{\vec{p}}^{(1)}=q_{\vec{p}}^{(2)},
    \nonumber
    \end{align}
is 1.
With a long but simple calculation, it is possible to find the range with the maximum equal to 1:
    \begin{align}
        \frac{2}{3}\leq q_{\vec{p}}\leq 1.
    \end{align}
    
By exploiting Corollary \ref{cor:bnd_ent}, the initial states and corresponding channel of the boundaries presented in section \ref{sec:sub:bnd_ent} are also the boundaries of the teleportation.
As a result, the boundaries of the entanglement are represented as
    \begin{align}
        M_{\mathrm{bound}}^{\mathrm{PBT}}&=\mathcal{M}\left[
        \Bigl(\tilde{\mathcal{E}}^{\mathrm{dep}}_{1-q_N}\circ
        \tilde{\mathcal{E}}^3_{3-3q_{\vec{p}}}\Bigr)
        \hat{\Psi}_\mathrm{bound}
        \right]
        \nonumber\\
        &=
        \mathrm{max}
        \Bigl[0,M^\mathrm{env}_{\mathrm{bound}}
        q_N^2-\frac{1-q_N^2}{2}
        \Bigr],
    \end{align}
where $\mathrm{bound}$ is $\mathrm{up}$ and $\mathrm{low}$ for the upper and lower bounds, respectively.
In the first equation, we made use of the notation 
$\hat{\Psi}_\alpha:=\ket{\psi_\alpha}\bra{\psi_\alpha}$ and 
$\tilde{\mathcal{E}}_\alpha^\beta:=(\mathcal{E}_\alpha^\beta\otimes\mathcal{E}_\alpha^\beta)$
for simplicity. The second equation is expressed in term of $q_N$ and $q_{\vec{P}}$, where $M^\mathrm{env}_{\mathrm{up}}$ and $M^\mathrm{env}_{\mathrm{low}}$ are the substitution of $(1-q_{\vec{p}})$ for $\Omega$ in Eq. (\ref{eq:ent_pl_low}) and (\ref{eq:ent_pl_up}), respectively.
 
At the asymptotic limit of $N\rightarrow\infty$ and small amount of error $\Omega=(p_1+p_2+p_3)/3\ll 1$, the lower and upper bounds are approximated by
%    \begin{align}
%        M^{\mathrm{PBT}}_{\mathrm{low}}
%        &\rightarrow 
%        \mathrm{max}\Bigl[
%        0,
%        M_0
%        -3(1-q_{\vec{p}})\frac{M_0+1}{2}
%        -\frac{1}{N}\left(2 M_0+1\right)
%        \Bigr],\\
%        M^{\mathrm{PBT}}_{\mathrm{up}}
%        &\rightarrow
%        \mathrm{max}\Bigl[
%        0,
%        M_0
%        -3(1-q_{\vec{p}})M_0
%        -\frac{1}{N}\left(2 M_0+1\right)
%        \Bigr].
%    \end{align}
    \begin{align}
        M^{\mathrm{PBT}}_{\mathrm{low}}
        &\rightarrow 
        \mathrm{max}\Bigl[
        0,
        M_0
        -6\Omega\frac{M_0+1}{2}
        -\frac{1}{N}\left(2 M_0+1\right)
        \Bigr],\\
        M^{\mathrm{PBT}}_{\mathrm{up}}
        &\rightarrow
        \mathrm{max}\Bigl[
        0,
        M_0
        -6\Omega M_0
        -\frac{1}{N}\left(2 M_0+1\right)
        \Bigr],
    \end{align}
where we have approximated Eq. (\ref{eq:PBT_q_0}) as
    \begin{align}
        q_{\vec{p}}\rightarrow1-2\Omega.
    \end{align}
For both equations, the first term represents the initial entanglement, the second term denotes the loss due to environment noise, and the last term accounts for the loss attributed to the limit on the number of ports.
It is evident that the second term is proportional to the average of noise probabilities $\Omega$, and the last term is reciprocal to the number of ports $N$.

Let us focus on the second term first.
When the unknown state initially possesses maximum entanglement, $M_0 = 1$, the loss exactly matches 6 times the average of noise probabilities, which is $6\Omega$. 
As the measure of entanglement decreases, the second term in the upper bound, which represents the minimum loss, consistently remains equal to this value, and the upper and lower boundaries widen to $(1-M_0)/2$.
Turning our attention to the last term, we can see that its slope ranges from 0 to 3, given that $M_0$ varies between 0 to 1.
In addition, the slope remains constant at $2M_0+1$
for states with the same initial entanglement, as the teleported entanglement of lower and upper bounds are equivalent.

\section{Conclusions}
\label{sec:con}
In this study, we conducted an in-depth exploration of entanglement teleportation through noisy PBT. We determined the boundaries of the measure of entanglement for teleported unknown states as a function of initial entanglement and the average of channel probabilities. Specifically, we delved into the behavior of entanglement loss in the asymptotic limit of the number of ports and at the existence of small amount of noise.

Our findings reveal that the loss of entanglement due to the limited number of ports is less than the inverse of the number of ports, unaffected by noise, and larger for stronger entanglement of the unknown state. The other loss due to environmental noise is proportional to the average of channel probabilities, with the maximum slope equal to 6 times of the measure of entanglement; smaller entanglement can result in smaller slope depending on the unknown state. In the course of deriving our results, we proved that the order of entanglement of two-qubit states is preserved under the influence of the local Pauli channel. Moreover, we determined the entanglement boundaries of the channel.

The PBET protocol offers a significant advantage in implementing quantum entanglement distribution \cite{chou2007functional} in practice, as it eliminates the need for quantum correction by the receiver. 
Through an analysis of the standard PBET, we have demonstrated that entanglement can be efficiently distributed by a protocol utilizing resource states of large size in the presence of noise.
We anticipate that the determination of the boundaries of teleported entanglement will provide a reliable guide for developing optimized PBET protocols.

There are several intriguing questions that remain unanswered concerning noisy PBT and beyond.
For instance, we could explore how entanglement loss varies when the resource state is subjected to amplitude damping noise.
Furthermore, we can investigate entanglement teleportation within different variants of PBT protocols or even more general teleportation schemes, extending our research beyond the scope of the standard PBT studied here. These future research avenues hold great promise and can build upon the foundational knowledge acquired through the study of noisy PBT.

\section*{Acknowledgements}

This research was supported by Creation of the Quantum Information Science R\&D Ecosystem through the National Research Foundation of Korea funded by Ministry of Science (Grant No. NRF-2023R1A2C1005588). K.J. acknowledges support by the National Research Foundation of Korea through a grant funded by the Ministry of Science and ICT (NRF-2022M3H3A1098237), the Ministry of Education (NRF-2021R1I1A1A01042199), and Korea Institute of Science and Technology Information (P23031).
All numerical calculations and figures were performed using Wolfram Research, Inc., Mathematica, Version 13.3, Champaign, IL (2023).
%%%%%%%%%%%%%%%%%%%%%%%%%%%%%%%%%%%%%%%%%%%%%
%%%%%%%%%%%%%%%%%%%%%%%%%%%%%%%%%%%%%%%%%%%%%
%%%%%%%%%%%%%%%%%%%%%%%%%%%%%%%%%%%%%%%%%%%%%

%

\end{document}